\def\eps{\epsilon}
\newcommand{\bra}[1]{\l\langle #1\r|}
\newcommand{\ket}[1]{\l|#1\r\rangle}
\newcommand{\proj}[1]{|#1\rangle\langle #1|}
\newcommand{\braket}[2]{\l\langle #1|#2\r\rangle}
\DeclareMathOperator{\poly}{poly}
\def\benum#1\eenum{\begin{enumerate}#1\end{enumerate}}
\def\be#1\ee{\begin{equation}#1\end{equation}}
\def\bml#1\eml{\begin{multline}#1\end{multline}}
\def\ba#1\ea{\begin{align}#1\end{align}}
\def\bas#1\eas{\begin{align*}#1\end{align*}}
\def\eq#1{(\ref{eq:#1})}
\def\secref#1{Section~\ref{sec:#1}}
\def\thmref#1{Theorem~\ref{thm:#1}}
\def\lemref#1{Lemma~\ref{lem:#1}}
\def\bit{\begin{itemize}}
\def\eit{\end{itemize}}
\def\l{\left}
\def\r{\right}
\def\ot{\otimes}
\def\tl{\tilde{\lambda}}
\newtheorem{thm}{Theorem}
\newtheorem{lem}[thm]{Lemma}
\def\vb{\vec{b}}
\def\vx{\vec{x}}
\def\ra{\rightarrow}
\def\bbC{\mathbb{C}}
\def\bbZ{\mathbb{Z}}
\def\bbE{\mathbb{E}}
\newcommand{\Real}{\textrm{Re}}
\begin{document}


\title{Quantum algorithm for linear systems of equations}

\author{Aram W. Harrow}
\affiliation{Department of Mathematics, University
    of Bristol, Bristol, BS8 1TW, U.K.}
  \author{Avinatan
  Hassidim}
  \affiliation{MIT - Research Laboratory for Electronics,
    Cambridge, MA 02139, USA}
    \author{Seth Lloyd}\affiliation{MIT - Research
    Laboratory for Electronics and Department of Mechanical Engineering,
    Cambridge, MA 02139, USA}


\begin{abstract}

Solving linear systems of equations is
a common problem that arises both on its own and as a subroutine
in more complex problems: given a matrix ${A}$ and
a vector $\vec b$, find a vector
$\vec x$ such that ${A}\vec x = \vec b$. We consider the case where one doesn't need to know the solution $\vec x$ itself, but rather an
approximation of the expectation value of some operator associated
with $\vec x$, e.g., $\vec x^\dagger M \vec x$ for some matrix
${M}$.  In this case, when ${A}$ is sparse, $N\times N$ and has
condition number $\kappa$, classical
algorithms can find $\vec x$ and estimate $\vec x^\dag M \vec x$ in
$\tilde O(N\sqrt{\kappa})$ time.  Here, we exhibit a quantum algorithm
for this task that runs in $\poly(\log N, \kappa)$ time, an exponential
improvement over the best classical algorithm.
\end{abstract}

\maketitle

\section{Introduction}
Quantum computers are devices that harness quantum mechanics to
perform computations in ways that classical computers cannot.  For
certain problems, quantum algorithms supply exponential speedups over
their classical counterparts, the most famous example being Shor's
factoring algorithm \cite{Shor:1994:AQC}.  Few such exponential
speedups are known, and those that are (such as the use of quantum
computers to simulate other quantum systems \cite{Lloyd96a}) have so
far found limited use outside the domain of quantum mechanics.  This
paper presents a quantum algorithm to estimate features of the
solution of a set of linear equations.  Compared to classical
algorithms for the same task, our algorithm can be as much as
exponentially faster.

Linear equations play an important role in virtually all fields of
science and engineering.  The sizes of the data sets that define the
equations are growing rapidly over time, so that terabytes and even
petabytes of data may need to be processed to obtain a solution. In
other cases, such as when discretizing partial differential equations,
the linear equations may be implicitly defined and thus far larger
than the original description of the problem.  For a classical
computer even to approximate the solution of $N$ linear equations in
$N$ unknowns in general requires time that scales at least as $N$.
Indeed, merely to write out the solution takes time of order $N$.
Frequently, however, one is interested not in the full solution to the
equations, but rather in computing some function of that solution,
such as determining the total weight of some subset of the indices.
We show that in some cases, a quantum computer can approximate the
value of such a function in time which scales logarithmically in $N$,
and polynomially in the condition number (defined below) and desired
precision.  The dependence on $N$ is exponentially better than what is
achievable classically, while the dependence on condition number is
comparable, and the dependence on error is worse.  Thus our algorithm
can achieve useful, and even exponential, speedups in a wide variety
of settings where $N$ is large and the condition number is small.

We sketch here the basic idea of our algorithm, and then discuss it in
more detail in the next section.  Given a Hermitian $N\times N$ matrix
$A$, and a unit vector $\vec b$, suppose we would like to find $\vec
x$ satisfying $A\vec x=\vec b$.  (We discuss later questions of
efficiency as well as how the assumptions we have made about $A$ and
$\vec b$ can be relaxed.)  First, the algorithm represents $\vec b$ as
a quantum state $\ket b = \sum_{i=1}^N b_i \ket i$.  Next, we use
techniques of Hamiltonian simulation\cite{berry2007eqa,childs-2008} to
apply $e^{iAt}$ to $\ket b$ for a superposition of different times
$t$.  This ability to exponentiate $A$ translates, via the well-known
technique of phase estimation\cite{LP96,cleve1997qar,BDM98}, into the
ability to decompose $\ket{b}$ in the eigenbasis of $A$ and to find
the corresponding eigenvalues $\lambda_j$. Informally, the state of
the system after this stage is close to $\sum_{j=1}^N \beta_j
\ket{u_j}\ket{\lambda_j}$, where $u_j$ is the eigenvector basis of
$A$, and $\ket{b} = \sum_{j=1}^N \beta_j \ket{u_j}$.  We would then
like to perform the linear map taking $\ket{\lambda_j}$ to
$C\lambda_j^{-1} \ket{\lambda_j}$, where $C$ is a normalizing
constant.  As this operation is not unitary, it has some probability
of failing, which will enter into our discussion of the run-time
below.  After it succeeds, we uncompute the $\ket{\lambda_j}$ register
and are left with a state proportional to $\sum_{j=1}^N \beta_j
\lambda_j^{-1} \ket{u_j} = A^{-1}\ket b = \ket x$.

An important factor in the performance of the matrix inversion
algorithm is $\kappa$, the condition number of $A$, or the ratio
between $A$'s largest and smallest eigenvalues. As the condition
number grows, $A$ becomes closer to a matrix which cannot be inverted,
and the solutions become less stable.  Such a matrix is said to be
``ill-conditioned.''  Our algorithms will generally assume that the
singular values of $A$ lie between $1/\kappa$ and 1; equivalently
$\kappa^{-2}I \leq A^\dag A \leq I$.  In this case, our runtime will
scale as $\kappa^2\log(N)/\eps$, where $\eps$ is the additive error
achieved in the output state $\ket x$.  Therefore, the greatest
advantage our algorithm has over classical algorithms occurs when both
$\kappa$ and 
$1/\eps$ are $\poly\log(N)$, in which case it achieves an exponential
speedup.  However, we will also discuss later some 
techniques for  handling ill-conditioned matrices.

This procedure yields a quantum-mechanical representation $\ket{x}$ of
the desired vector $\vec x$.  Clearly, to read out all the components
of $\vec x$ would require one to perform the procedure at least $N$
times.  However, often one is interested not in $\vec x$ itself, but in
some expectation value $\vec x^T M \vec x$, where $M$ is some linear
operator (our procedure also accommodates nonlinear operators as
described below).  By mapping $M$ to a quantum-mechanical operator,
and performing the quantum measurement corresponding to $M$, we obtain
an estimate of the expectation value $\bra{x} M \ket{x} = \vec x^T M
\vec x$, as desired.  A wide variety of features of the vector $\vec
x$ can be extracted in this way, including normalization, weights in
different parts of the state space, moments, etc.

A simple example where the algorithm can be used is to see if two
different stochastic processes have similar stable state
\cite{luen79}. Consider a stochastic process $\vec x_t = A \vec
x_{t-1} + \vec b$,
where the $i$'th coordinate in the vector $\vec x_t$ represents the
abundance of item $i$ in time $t$. The stable state of this
distribution is given by $\ket{x} = (I - A)^{-1}\ket{b}$. Let $\vec
x'_t = A' \vec x'_{t-1} +  \vec b'$, and $\ket{x'} = (I - 
A')^{-1}\ket{b'}$. To know if $\ket{x}$ and $\ket{x'}$ are similar, we
perform the SWAP test between them 
\cite{buhrman2001qf}. We note that classically finding out if two
probability distributions are similar requires at least $O(\sqrt{N})$
samples \cite{Valiant08}.

The strength of the algorithm is that it works only with $O(\log
N)$-qubit registers, and never has to write down all of $A$, $\vec b$
or $\vec x$.  In situations (detailed below) where the Hamiltonian
simulation and our non-unitary step incur only $\poly\log(N)$
overhead, this means our algorithm takes exponentially less time than
a classical computer would need even to write down the output.  In
that sense, our algorithm is related to classical Monte Carlo
algorithms, which achieve dramatic speedups by working with samples
from a probability distribution on $N$ objects rather than by writing
down all $N$ components of the distribution.  However, while these
classical sampling algorithms are powerful, we will prove that in fact
{\em any} classical algorithm requires in general exponentially more
time than our quantum algorithms to perform the same matrix inversion
task.

\paragraph*{Outline}
The rest of the Letter proceeds by first describing our algorithm in
detail, analyzing its run-time and comparing it with the best known
classical algorithms.  Next, we prove (modulo some
complexity-theoretic assumptions) hardness results for matrix
inversion that imply both that our algorithm's run-time is nearly
optimal, and that it runs exponentially faster than any classical
algorithm.  We conclude with a discussion of applications,
generalizations and extensions.

\paragraph*{Related work}
Previous papers gave quantum algorithms to perform linear algebraic
operations in a limited setting \cite{klappenecker2003qpt}.  Our work
was extended by \cite{LO-08} to solving nonlinear differential
equations.

\section{Algorithm}
We now give a more detailed explanation of the algorithm.  First, we
want to transform a given Hermitian matrix $A$ into a unitary operator
$e^{iA t}$ which we can apply at will.  This is possible (for example)
if $A$ is $s$-sparse and efficiently row computable, meaning it has at
most $s$ nonzero entries per row and given a row index these entries
can be computed in time $O(s)$.  Under these assumptions,
Ref.~\cite{berry2007eqa} shows how to simulate $e^{iAt}$ in time 
\[\tilde{O}(\log(N) s^2 t), \] where the $\tilde O$ suppresses more
slowly-growing terms (described in \cite{HHL-08}). If $A$ is not
Hermitian, define
\begin{equation} C = \begin{pmatrix}
  0     & A \\
  A^{\dagger} & 0
\end{pmatrix}
\end{equation}
As $C$ is Hermitian, we can solve the equation $C\vec y
= \begin{pmatrix}\vec b \\ 0 \end{pmatrix}$ to obtain $y
= \begin{pmatrix}0 \\ \vec x\end{pmatrix}$. Applying this reduction if
necessary, the rest of the Letter assumes that $A$ is Hermitian. 

We also need an efficient procedure to prepare $\ket{b}$.  For
example, if $b_i$ and $\sum_{i=i_1}^{i_2} |b_i|^2$ are efficiently
computable then we can use the procedure of Ref.~\cite{grover02} to
prepare $\ket{b}$.  Alternatively, our algorithm could be a subroutine
in a larger quantum algorithm of which some other component is responsible for
producing $\ket b$.

The next step is to decompose $\ket{b}$ in the eigenvector
basis, using phase estimation \cite{LP96,cleve1997qar,BDM98}.  Denote by
$\ket{u_j}$ the eigenvectors of $A$ (or equivalently, of $e^{iAt}$),
and by $\lambda_j$ the corresponding eigenvalues. Let 
\be \ket{\Psi_0} :=
\sqrt{\frac{2}{T}} \sum_{\tau=0}^{T-1}
\sin\frac{\pi(\tau+\frac{1}{2})}{T} \ket{\tau}
\ee
for some large $T$.  The coefficients of $\ket{\Psi_0}$ are chosen
(following \cite{LP96,BDM98}) to minimize a certain quadratic loss function
which appears in our error analysis (see \cite{HHL-08} for
details).

Next we apply the conditional Hamiltonian evolution
$\sum_{\tau=0}^{T-1} \proj{\tau}^C \ot e^{iA\tau t_0 / T}$ on
$\ket{\Psi_0}^C \otimes \ket{b}$, where  $t_0 =
O(\kappa/\epsilon)$.
Fourier transforming the first register gives the state
\begin{equation}
 \sum_{j=1}^N \sum_{k=0}^{T-1} \alpha_{k|j} \beta_j \ket{k}\ket{u_j},
\label{eq:alpha-beta-state}\end{equation}
\noindent where $\ket{k}$ are the Fourier basis states, and $|\alpha_{k|j}|$ is large if and only if $\lambda_j
\approx \frac{2 \pi k}{t_0}$. Defining $\tilde{\lambda}_k := 2\pi k
/t_0$, we can relabel our $\ket{k}$ register to obtain
\[
\sum_{j=1}^N \sum_{k=0}^{T-1}  \alpha_{k|j} \beta_j \ket{\tilde{\lambda}_k}\ket{u_j}\]
Adding an ancilla qubit and rotating conditioned on $\ket{\tl_k}$ yields
\[
\sum_{j=1}^N \sum_{k=0}^{T-1}  \alpha_{k|j} \beta_j
\ket{\tilde{\lambda}_k}\ket{u_j} \left( \sqrt{1 -
    \frac{C^2}{\tilde{\lambda}_k^2}} \ket{0} +
  \frac{C}{\tilde{\lambda}_k}\ket{1} \right),\] where $C  =
O(1/\kappa)$. We now undo the phase estimation to uncompute the
$\ket{\tilde{\lambda}_k}$.  If the phase estimation were perfect, we
would have  $\alpha_{k|j} = 1$ if $\tilde{\lambda}_k = \lambda_j$, and
$0$ otherwise. Assuming this for now, we obtain
\[
\sum_{j=1}^N \beta_j \ket{u_j} \left( \sqrt{1 - \frac{C^2}{\lambda_j^2}} \ket{0} + \frac{C}{\lambda_j}\ket{1} \right)\]

To finish the inversion we measure the last qubit. Conditioned on seeing 1, we have the state
\[ \sqrt{\frac{1}{\sum_{j=1}^N C^2|\beta_j|^2 / |\lambda_j|^2}} \sum_{j=1}^N
\beta_j \frac{C}{\lambda_j} \ket{u_j} \] which corresponds to $\ket{x}
= \sum_{j=1}^n \beta_j\lambda_j^{-1} \ket{u_j}$ up to
normalization. We can determine the normalization factor from the
probability of obtaining 1. Finally, we make a measurement $M$ whose
expectation value $\bra{x} M \ket{x}$ corresponds to the feature of
$\vec x$ that we wish to evaluate.

\paragraph*{Run-time and error analysis}
We present an informal description of the sources of error; the exact
error analysis and runtime considerations are presented in
\cite{HHL-08}. Performing the phase estimation is done by simulating
$e^{iAt}$.  Assuming that $A$ is $s$-sparse, this can be done with
 error $\eps$ in time proportional to $ts^2(t/\eps)^{o(1)} =: \tilde O(ts^2)$.

The dominant source of error is phase estimation.  This step errs by
$O(1/t_0)$ in estimating $\lambda$, which translates into a relative
error of $O(1/\lambda t_0)$ in $\lambda^{-1}$.  If $\lambda \geq
1/\kappa$ taking $t_0 = O(\kappa/\eps)$ induces a final error of
$\eps$.  Finally, we consider the success probability of the
post-selection process.  Since $C=O(1/\kappa)$ and $\lambda \leq 1$,
this probability is at least $\Omega(1/\kappa^2)$.  Using amplitude
amplification \cite{BHMT02}, we find that $O(\kappa)$ repetitions are
sufficient. Putting this all together, 
we obtain the stated runtime of $\tilde{O} \left( \log(N)s^2\kappa^2 /
  {\eps} \right)$.

\section{Optimality}
\paragraph*{Classical matrix inversion algorithms}
To put our algorithm in context, one of the best general-purpose
classical matrix 
inversion algorithms is the conjugate gradient
method \cite{shewchuk94}, which, when $A$ is positive definite, uses
$O(\sqrt{\kappa}\log(1/\eps))$ matrix-vector multiplications each
taking time $O(Ns)$ for a total runtime of
$O(Ns\sqrt{\kappa}\log(1/\eps))$.  (If $A$ is not positive
definite, $O(\kappa\log(1/\eps))$ multiplications are required,
for a total time of $O(Ns{\kappa}\log(1/\eps))$.)
 An important question is whether
classical methods can be improved when only a summary statistic of the
solution, such as $\vec x^\dag M \vec x$, is required.  Another
question is whether our quantum algorithm could be improved, say to
achieve error $\eps$ in time proportional to $\poly\log(1/\eps)$.  We
show that the answer to both questions is negative, using an argument
from complexity theory.  Our strategy is to prove that the ability to
invert matrices (with the right choice of parameters) can be used to
simulate a general quantum computation.

\paragraph*{The complexity of matrix inversion}
We show that a quantum circuit using $n$ qubits and $T$ gates can
be simulated by inverting an $O(1)$-sparse matrix $A$
of dimension $N=O(2^n \kappa)$.  The
condition number $\kappa$ is $O(T^2)$ if we need $A$ to be
positive definite or $O(T)$ if not.
This implies that a classical $\poly(\log N, \kappa, 1/\eps)$-time
algorithm would be able to simulate a $\poly(n)$-gate quantum
algorithm in $\poly(n)$ time.  Such a simulation is strongly
conjectured to be false, and is known to be impossible in the presence
of oracles \cite{simon}.

The reduction from a general quantum circuit to a matrix inversion
problem also implies that our algorithm cannot be substantially
improved (under standard assumptions).  If the run-time could be made
polylogarithmic in $\kappa$, then any problem solvable on $n$ qubits
could be solved in $\poly(n)$ time (i.e. {\bf BQP}={\bf PSPACE}), a
highly unlikely possibility. 
Even improving our $\kappa$-dependence to $\kappa^{1-\delta}$ for
$\delta>0$ would allow any time-$T$ quantum algorithm to be simulated
in time $o(T)$; iterating this would again imply that {\bf BQP}={\bf
  PSPACE}.  Similarly, improving the error dependence to
$\poly\log(1/\eps)$ would imply that {\bf BQP} includes {\bf PP}, and
even minor improvements would contradict oracle lower bounds
\cite{parity}.

\paragraph*{The reduction}
We now present the key reduction from simulating a quantum circuit to
matrix inversion.  Let $\cal{C}$ be a quantum circuit acting on $n
= \log N$ qubits which applies $T$ two-qubit gates
$U_1, \ldots U_T$.  The initial state is $\ket{0}^{\otimes n}$ and the
answer is determined by measuring the first qubit of the final state.

Now adjoin an ancilla register of dimension $3T$ and
define a unitary
\be U = \sum_{t=1}^{T} \ket{t\!+\!1}\!\bra{t} \otimes U_t
\,+\, \ket{t\!+\!T\!+\!1}\!\bra{t\!+\!T} \otimes I
\\ 
\,+\, \ket{t\!+\!2T\!+\!1 \bmod{3T}}\!\bra{t\!+\!2T} \otimes U_{3T+1-t}^\dag.
\label{eq:feynman-unitary}\ee
We have chosen $U$ so that for $T+1\leq t\leq 2T$, applying $U^t$ to
$\ket{1}\ket{\psi}$ yields
$\ket{t+1}\otimes U_T\cdots U_1\ket{\psi}$.
If we now define $A = I - U e^{-1/T}$ then $\kappa(A) = O(T)$, and we can expand
\be A^{-1} = \sum_{k\geq 0} U^k e^{-k/T},
\label{eq:inv-expansion}\ee
This can be interpreted as applying $U^t$ for $t$ a
geometrically-distributed random variable.  Since $U^{3T}=I$, we can
assume $1\leq t \leq 3T$.  If we measure the first register and obtain
$T+1\leq t\leq 2T$ (which occurs
with probability $e^{-2}/(1+e^{-2}+e^{-4})\geq 1/10$) then we are left
with the second register in the state $U_T\cdots U_1\ket{\psi}$,
corresponding to a successful computation.  Sampling from $\ket{x}$
allows us to sample from the results of the computation.  This
establishes that matrix inversion is
{\bf BQP}-complete, and proves our above claims about the difficulty
of improving our algorithm.

\section{Discussion}
There are a number of ways to extend our algorithm and relax the assumptions
we made while presenting it.  We will discuss first how to invert a
broader class of matrices and then consider measuring other features of
$\vec x$ and performing operations on $A$ other than inversion.

Certain non-sparse $A$ can be simulated and therefore inverted;
see \cite{childs-2008} for techniques and examples. It is also
possible to invert non-square matrices, using the reduction presented
from the non-Hermitian case to the Hermitian one.

The matrix inversion algorithm can also handle ill-conditioned
matrices by inverting only the part of $\ket{b}$ which is in the
well-conditioned part of the matrix. Formally, instead of transforming
$\ket{b} = \sum_j \beta_j \ket{u_j}$ to $\ket{x} = \sum_j
\lambda_j^{-1} \beta_j \ket{u_j}$, we transform it to a state which is
close to
\[ \sum_{j, \lambda_j < 1/\kappa}  \lambda_j^{-1} \beta_j \ket{u_j}
\ket{\text{well}}
+ \sum_{j, \lambda_j \ge 1/\kappa}  \beta_j \ket{u_j} \ket{\text{ill}}\]
in time proportional to $\kappa^2$ for any chosen $\kappa$ (i.e. not
necessarily the true condition number of $A$). The last qubit is a
flag which enables the user to estimate what the size of the
ill-conditioned part, or to handle it in any other way she wants. This
behavior can be advantageous if we know that $A$ is not 
invertible and we are interested in  the projection of $\ket{b}$ on
the well-conditioned part of $A$.

Another method that is often used in classical algorithms to handle
ill-conditioned matrices is to apply a preconditioner\cite{Chen05}.
If we have a method of generating a preconditioner matrix $B$ such
that $\kappa(AB)$ is smaller than $\kappa(A)$, then we can solve
$A\vec x=\vec b$ by instead solving the possibly easier matrix
inversion problem $(AB)\vec c = B\vec b$.  Further, if $A$ and $B$ are
both sparse, then $AB$ is as well.  Thus, as long as a state
proportional to $B\ket{b}$ can be efficiently prepared, our algorithm
could potentially run much faster if a suitable preconditioner is
used.

The outputs of the algorithm can also be generalized. We can estimate
degree-$2k$ polynomials in the entries of $\vec x$ by generating $k$
copies of $\ket{x}$ and measuring the $nk$-qubit observable
$$\sum_{i_1,\ldots,i_k,j_1,\ldots,j_k} M_{i_1,\ldots,i_k,j_1,\ldots,j_k} \ket{i_1,\ldots,i_k}\bra{j_1,\ldots,j_k}$$
on the state $\ket{x}^{\otimes k}.$ Alternatively, one can use our
algorithm to generate a quantum analogue of Monte-Carlo, where given
$A$ and $\vb$ we sample from the vector $\vx$, meaning that the value $i$
occurs with probability $|\vec{x}_i|^2 $.

Perhaps the most far-reaching generalization of the matrix inversion
algorithm is not to invert matrices at all! Instead, it can compute
$f(A)\ket{b}$ for any computable $f$. Depending on the degree of
nonlinearity of $f$, nontrivial tradeoffs between accuracy and
efficiency arise.  Some variants of this idea are considered in
\cite{SMM08,childs-2008,LO-08}.

{\bf Acknowledgements.}
We thank the W.M. Keck foundation for support, and
AWH thanks them as well as MIT for hospitality while this work was
carried out.  AWH was also funded by the U.K. EPSRC grant ``QIP IRC.'' SL
thanks R. Zecchina for encouraging him to work on this problem.
We are grateful as well to R. Cleve, D. Farmer, S. Gharabian, J.
Kelner, S. Mitter, P. Parillo, D. Spielman and M. Tegmark for helpful
discussions.

\appendix
\section{Proof details}

In this appendix, we describe and analyze our algorithm in full
detail.  While the body of the paper attempted to convey the spirit of
the procedure and left out various improvements, here we take the
opposite approach and describe everything, albeit possibly in a less
intuitive way.  We also describe in more detail our reductions from
non-Hermitian matrix inversion to Hermitian matrix inversion
(\secref{non-hermitian}) 
and from a general quantum computation to matrix inversion
(\secref{reduction}).

As inputs we require a procedure to produce the state $\ket{b}$, a
method of producing the $\leq s$ non-zero elements of any row of $A$
and a choice of cutoff $\kappa$.  Our run-time will be roughly
quadratic in $\kappa$ and our algorithm is guaranteed to be correct if
$\|A\| \leq 1$ and $\|A^{-1}\| \leq \kappa$.

The condition number is a crucial parameter in the algorithm.  Here we
present one possible method of handling
ill-conditioned matrices.  We will define the well-conditioned part of
$A$ to be the span of the eigenspaces corresponding to eigenvalues
$\geq 1/\kappa$ and the ill-conditioned part to be the rest.  Our
strategy will be to flag the ill-conditioned part of the
matrix (without inverting it), and let the user choose how to
further handle this.   Since we cannot exactly resolve any eigenvalue,
we can only approximately determine whether vectors are in the well-
or ill-conditioned subspaces.   Accordingly, we choose some $\kappa' >
\kappa$ (say $\kappa'=2\kappa$).  Our algorithm then inverts the
well-conditioned part of the matrix,
 flags any eigenvector with eigenvalue $\leq 1/\kappa'$ as
 ill-conditioned, and interpolates between these two behaviors when
 $1/\kappa' < |\lambda| < 1/\kappa$.  This is described formally in the
 next section.
 We present this strategy not because it is necessarily ideal in all
 cases, but because it gives a concrete illustration of the key
 components of our algorithm.

Finally, the algorithm produces $\ket{x}$ only up to some error $\eps$
which is given as part of the input.  We work only with pure states, and so
define error in terms of distance between vectors, i.e. $\|\,
\ket{\alpha}  - \ket{\beta}\,\| = \sqrt{2(1-\Real
  \braket{\alpha}{\beta})}$.  Since ancilla states are produced and
then imperfectly uncomputed by the algorithm, our output state will
technically have high fidelity not with $\ket{x}$ but with
$\ket{x}\ket{000\ldots}$.  In general we do not write down ancilla
qubits in the $\ket{0}$ state, so we write $\ket{x}$ instead of
$\ket{x}\ket{000\ldots}$ for the target state, $\ket{b}$ instead of
$\ket{b}\ket{000\ldots}$ for the initial state, and so on.

\subsection{Detailed description of the algorithm}

To produce the input state $\ket{b}$, we  assume that there exists an
efficiently-implementable
unitary $B$, which when applied
to $\ket{\text{initial}}$ produces the state $\ket{b}$, possibly
along with garbage in an ancilla register.  We make no further assumption
about $B$; it may represent another part of a larger algorithm, or a
standard state-preparation procedure such as \cite{grover02}.  Let $T_B$
be the number of gates required to implement $B$.  We neglect the
possibility that $B$ errs in producing $\ket{b}$ since, without any
other way of producing or verifying the state $\ket{b}$, we have no
way to mitigate these errors. Thus, any errors in producing $\ket{b}$
necessarily translate directly into errors in the final state $\ket{x}$.

Next, we define the state
\be \ket{\Psi_0} =
\sqrt{\frac{2}{T}} \sum_{\tau=0}^{T-1}
\sin\frac{\pi(\tau+\frac{1}{2})}{T} \ket{\tau}
\label{eq:buzek-state}\ee
for a $T$ to be chosen later.  Using \cite{grover02}, we can prepare
$\ket{\Psi_0}$ up to error $\eps_{\Psi}$ in time
$\poly\log(T/\eps_\Psi)$.

One other subroutine we will need is Hamiltonian simulation.  Using
the reductions described in \secref{non-hermitian}, we can assume that
$A$ is Hermitian.
To simuluate $e^{iAt}$ for some $t\geq 0$, we use the algorithm of
\cite{berry2007eqa}.  If $A$ is $s$-sparse, $t\leq t_0$ and we want to
guarantee that the error is $\leq \eps_H$, then this requires time 
\be
T_H = O(\log(N)(\log^*(N))^2s^2t_0 9^{\sqrt{\log(s^2t_0/\eps_H)}})
 = \tilde O (\log(N)s^2 t_0)
\label{eq:precise-H-sim-time}\ee
The scaling here is better than any power of $1/\eps_H$, which means
that the additional error introduced by this step introduces is
negligible compared with the rest of the algorithm, and the runtime is
almost linear with $t_0$. Note that this is the only step where we
require that $A$ be sparse; as there are some other types of
Hamiltonians which can be simulated efficiently
(e.g. \cite{aharonov2003aqs,berry2007eqa,childs-2008}), this broadens
the
set of matrices we can handle.

The key subroutine of the algorithm, denoted $U_{\text{invert}}$, is defined as
follows:
\begin{enumerate}
\item Prepare $\ket{\Psi_0}^C$ from $\ket{0}$ up to error $\eps_\Psi$.
\item Apply the conditional Hamiltonian evolution $\sum_{\tau=0}^{T-1}
  \proj{\tau}^C \ot e^{iA\tau t_0 / T}$ up to error $\eps_H$.
\item Apply the Fourier transform to the register $C$.  Denote the
  resulting basis states with $\ket{k}$, for $k=0,\ldots T-1$.  Define
  $\tilde{\lambda}_k := 2\pi k /t_0$.
\item Adjoin a three-dimensional register $S$ in the state
$$\ket{h(\tl_k)}^S :=
\sqrt{1-f(\tl_k)^2 - g(\tl_k)^2}\ket{\text{nothing}}^S
+ f(\tl_k)\ket{\text{well}}^S
+ g(\tl_k)\ket{\text{ill}}^S,$$
for functions $f(\lambda),g(\lambda)$ defined below in \eq{fg-defs}.  Here
`nothing' indicates
that the desired matrix inversion hasn't taken place, `well'
indicates that it has, and `ill' means that part of $\ket{b}$ is in
the ill-conditioned subspace of $A$.
\item Reverse steps 1-3, uncomputing any garbage produced
  along the way.
\end{enumerate}

The functions $f(\lambda),g(\lambda)$ are known as filter
functions\cite{Hansen98}, and are chosen so that for some
constant $C>1$:
$f(\lambda)=1/C\kappa\lambda$ for $\lambda \geq 1/\kappa$,
$g(\lambda)=1/C$ for $\lambda \leq 1/\kappa':= 1/2\kappa$ and $f^2(\lambda) +
g^2(\lambda)\leq 1$ for all $\lambda$.  Additionally, $f(\lambda)$
should satisfy a certain continuity property that we will describe in
the next section.  Otherwise the functions are arbitrary.  One
possible choice is
\begin{subequations}\label{eq:fg-defs}
\be f(\lambda) =
   \left\{ \begin{array}{ll}
       \frac{1}{2\kappa\lambda} &
\text{when } \lambda \ge 1/\kappa \\
\frac{1}{2}
 \sin \left( \frac{\pi}{2}\cdot \frac{\lambda -
     \frac{1}{\kappa^{\prime}}} {\frac{1}{\kappa} - \frac{1}{\kappa'}}
 \right) & \text{when }
\frac{1}{\kappa} > \lambda \ge \frac{1}{\kappa^{\prime}}\\
       0 &\text{when }
\frac{1}{\kappa^{\prime}} > \lambda \end{array} \right. \ee

\be g(\lambda) =
   \left\{ \begin{array}{ll}
       0 &\text{when } \lambda \ge 1/\kappa \\
\frac{1}{2} \cos \left( \frac{\pi}{2}\cdot \frac{\lambda -
     \frac{1}{\kappa^{\prime}}} {\frac{1}{\kappa} - \frac{1}{\kappa'}}
 \right) &\text{when } \frac{1}{\kappa} > \lambda \ge \frac{1}{\kappa^{\prime}}\\
\frac{1}{2}        &\text{when }
\frac{1}{\kappa^{\prime}} > \lambda \end{array} \right. \ee
\end{subequations}
If $U_{\text{invert}}$ is applied to $\ket{u_j}$ it will, up to an
error we will discuss below, adjoin the state $\ket{h(\lambda_j)}$.  Instead
if we apply $U_{\text{invert}}$ to $\ket{b}$ (i.e. a superposition of
different $\ket{u_j}$), measure $S$ and obtain the outcome `well',
then we will have approximately applied an operator proportional to
$A^{-1}$.  Let $\tilde p$ (computed in the next section) denote the success
probability of this measurement.  Rather than repeating $1/\tilde p$
times, we will use amplitude amplification \cite{BHMT02} to obtain the
same results with $O(1/\sqrt{\tilde p})$ repetitions.  To describe the
procedure, we introduce two new operators:
$$R_{\text{succ}} = I^S- 2\proj{\text{well}}^S,$$
acting only on the $S$ register and
$$R_{\text{init}} = I - 2\proj{\text{initial}}.$$

Our main algorithm then follows the amplitude amplification procedure:
we start with $U_{\text{invert}}B\ket{\text{initial}}$ and repeatedly apply
$U_{\text{invert}} B R_{\text{init}} B^\dag U_{\text{invert}}^\dag
R_{\text{succ}}$.  Finally we measure $S$ and stop when we obtain the
result `well'.  The number of repetitions would ideally be
$\pi/4\sqrt{\tilde p}$, which in the next section we will show is
$O({\kappa})$.   While $\tilde{p}$ is initially unknown, the
procedure has a constant probability of success if the number of
repetitions is a constant fraction of $\pi/4\tilde{p}$.  Thus, following
\cite{BHMT02} we repeat the entire procedure with a geometrically
increasing number of repetitions each time: 1, 2, 4, 8, \ldots, until we
have reached a power of two that is
$\geq {\kappa}$.  This yields a constant probability of success
using $\leq 4{\kappa}$ repetitions.

Putting everything together, the run-time is
$\tilde O({\kappa}(T_B + t_0s^2 \log(N))$, where the $\tilde O$
suppresses the more-slowly growing terms of $(\log^*(N))^2$,
$\exp(O(1/\sqrt{\log(t_0/\eps_H)}))$ and $\poly\log(T/\eps_\Psi)$.  In
the next section, we will show that $t_0$ can be taken to be
$O(\kappa/\eps)$ so that the total run-time is
$\tilde O({\kappa}T_B + \kappa^2
s^2 \log(N)/\eps)$.

\subsection{Error Analysis}
In this section we show that taking $t_0 = O(\kappa/{\eps})$
introduces an error of $\leq \eps$ in the final state.  The main
subtlety in analyzing the error comes from the post-selection step, in
which we choose only the part of the state attached to the
$\ket{\text{well}}$ register.  This can potentially magnify errors
in the overall state.  On the other hand, we may also be interested in
the non-postselected state, which results from applying
$U_{\text{invert}}$ a single time to $\ket{b}$.  For instance, this
could be used to estimate the amount of weight of $\ket{b}$ lying in
the ill-conditioned
components of $A$.  Somewhat surprisingly,
we show that the error in both cases is upper-bounded by
$O(\kappa/t_0)$.

In this section, it will be convenient to ignore the error terms
$\eps_H$ and $\eps_\Psi$, as these can be made negligible with
relatively little effort and it is the errors from phase estimation
that will dominate.  Let $\tilde{U}$ denote a version of
$U_{\text{invert}}$ in which everything except the phase estimation is
exact.  Since $ \|\tilde U - U_{\text{invert}} \| \leq O(\eps_H +
\eps_\Psi)$, it is sufficient to work with $\tilde U$.
Define $U$ to
be the ideal version of $U_{\text{invert}}$ in which there is no error
in any step.

\begin{thm}[Error bounds]\label{thm:error}\
\begin{enumerate}
\item In the case when no post-selection is performed, the error is bounded as
\be\|\tilde U - U \| \leq O(\kappa/t_0).\label{eq:no-ps-error}\ee
\item If we post-select on the flag register being in the space spanned by $\{\ket{\text{well}}, \ket{\text{ill}}\}$ and define the normalized ideal state to be $\ket{x}$ and our actual state to be $\ket{\tilde x}$ then
\be\|\,\ket{\tilde x}-\ket{x}\,\|\leq O(\kappa/t_0).\label{eq:fg-ps-error}\ee
\item If $\ket{b}$ is entirely within the well-conditioned subspace of $A$ and we post-select on the flag register being $\ket{\text{well}}$ then
\be\|\,\ket{\tilde x}-\ket{x}\,\|\leq O(\kappa/t_0). \label{eq:f-ps-error}\ee
\end{enumerate}
\end{thm}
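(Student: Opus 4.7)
The plan is to analyze $\tilde U$ eigenvector by eigenvector. Expanding $\ket{b}=\sum_j \beta_j \ket{u_j}$, the state after step~4 of $U_{\text{invert}}$ (ignoring the negligible $\eps_H,\eps_\Psi$ contributions) is
\[
\sum_j \beta_j \ket{u_j}\sum_k \alpha_{k|j}\ket{\tl_k}\ket{h(\tl_k)},
\]
where $\alpha_{k|j}$ is the amplitude with which phase estimation assigns the estimate $\tl_k:=2\pi k/t_0$ to the true eigenvalue $\lambda_j$. For the ideal $U$, phase estimation is sharp ($\alpha_{k|j}=\delta_{\tl_k,\lambda_j}$), so step~5 uncomputes back to $\ket{0}\ket{u_j}\ket{h(\lambda_j)}$. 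What must be bounded is the deviation of $\sum_k \alpha_{k|j}\ket{\tl_k}\ket{h(\tl_k)}$ from the product $\l(\sum_k \alpha_{k|j}\ket{\tl_k}\r)\ket{h(\lambda_j)}$; this deviation propagates linearly through the unitary uncomputation of step~5.

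The bound uses two ingredients. First, $\ket{\Psi_0}$ in \eq{buzek-state} was designed in \cite{LP96,BDM98} to minimize a quadratic phase-estimation loss, yielding (away from wraparound) the concentration bound $\sum_k |\alpha_{k|j}|^2 (\tl_k-\lambda_j)^2 = O(1/t_0^2)$. Second, the filter functions $f,g$ of \eq{fg-defs} are $O(\kappa)$-Lipschitz: on $[1/\kappa,1]$, $|f'(\lambda)|=1/(2\kappa\lambda^2)\le \kappa/2$; on the transition window of width $1/\kappa-1/\kappa'=\Theta(1/\kappa)$ the sinusoids have derivative bounded by $\pi\kappa/2$; and on the ill-conditioned region $g\equiv 1/2$ is constant. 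Hence $\|\ket{h(\tl_k)}-\ket{h(\lambda_j)}\|^2 \le O(\kappa^2 (\tl_k-\lambda_j)^2)$. Combining the two bounds gives $\|\tilde U \ket{u_j} - U\ket{u_j}\|^2 \le O(\kappa^2/t_0^2)$, and taking a convex combination over $j$ weighted by $|\beta_j|^2$ yields part~1.

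Parts~2 and~3 cannot be obtained by naively dividing the part-1 bound by $\sqrt{\tilde p}$: with $\tilde p \ge \Omega(1/\kappa^2)$ that produces only $O(\kappa^2/t_0)$, which is too weak. The improvement comes from upgrading the Lipschitz estimate to a \emph{relative} bound. For $\lambda_j \ge 1/\kappa$, $|f'(\lambda_j)|/f(\lambda_j) = 1/\lambda_j \le \kappa$, so $|f(\tl_k)-f(\lambda_j)|\le (\kappa/t_0) f(\lambda_j)$ on the support of $\alpha_{k|j}$; in the ill-conditioned region $g$ is constant and the error vanishes; in the transition regime an analogous estimate follows from the continuity property of $f$ alluded to before \eq{fg-defs}. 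Thus each effective eigenbasis amplitude of the post-selected state is perturbed from the ideal by a \emph{real} multiplicative factor $1\pm O(\kappa/t_0)$, and a short calculation showing $|\braket{v}{\tilde v}|^2/(\|v\|^2\|\tilde v\|^2)=1-O(\delta^2)$ for pointwise real multiplicative perturbations converts this to $\|\,\ket{\tilde x}-\ket{x}\,\| = O(\kappa/t_0)$ after normalization, in both parts~2 and~3.

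The main obstacle is honestly handling the residual entanglement between the uncomputed $\ket{\tl_k}$ register and the $S$-register: because step~5 runs phase estimation in reverse while the attached $\ket{h(\tl_k)}$ depends on $k$, imperfect cancellation leaves cross-terms that could in principle conspire with phases to spoil the pointwise multiplicative estimate. Making the argument rigorous requires showing that the uncomputation residual is itself bounded by the same $O(1/t_0^2)$ quadratic loss, and that within the support of $\alpha_{k|j}$ the $S$-register varies smoothly in $\tl_k$. The precise continuity property of $f$ required (but not stated) in \eq{fg-defs} is the hypothesis that packages this estimate uniformly across the well-conditioned, transition, and ill-conditioned regimes.
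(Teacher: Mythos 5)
Your proposal identifies the paper's two key ingredients exactly: an absolute $O(\kappa)$-Lipschitz bound on $\lambda\mapsto\ket{h(\lambda)}$ for part~1, and a \emph{relative} Lipschitz bound $|f-\tilde f|^2 + |g-\tilde g|^2 \le O\bigl(\kappa^2\delta^2/t_0^2\bigr)(f^2+g^2)$ for parts~2 and~3 (this is the paper's Lemma~\ref{lem:fg-cont}). You also correctly observe that dividing the part-1 error by $\sqrt{\tilde p}$ is too lossy, which is precisely what motivates the relative bound in the paper. So the strategy matches.

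Two points are weaker than they should be. Your worry about ``residual entanglement between the uncomputed $\ket{\tl_k}$ register and the $S$-register'' is not actually an obstruction: since the uncomputation $\tilde P^\dag$ is unitary, the paper computes $\braket{\tilde\varphi}{\varphi}$ by applying $\tilde P$ to both $\ket{\varphi}=U\ket{b}$ and $\ket{\tilde\varphi}=\tilde U\ket{b}$, after which both states live in the common $\ket{u_j}\ket{k}\ket{\cdot}$ basis and the inner product factors cleanly as $\bbE_{j,k}\braket{h(\tl_k)}{h(\lambda_j)}$ — no separate ``uncomputation residual'' bound is required. More substantively, the relative Lipschitz estimate requires a genuine case analysis near the thresholds $1/\kappa$ and $1/\kappa'$: when $\lambda_j$ and $\tl_k$ straddle $1/\kappa$, the pointwise bound $|f'/f|\le\kappa$ does not directly apply, and in the ill-conditioned region the relative bound comes not from ``the error vanishing'' but from the denominator satisfying $f^2+g^2=1/2$ there; the paper's Lemma~\ref{lem:fg-cont} handles four cases to cover this. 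The ``short calculation'' you invoke for parts~2/3 is also the heart of the proof — it requires expanding $\tilde p - p$, substituting into a fidelity formula, and a Cauchy--Schwarz step to control $\bbE[(\tilde f-f)f+(\tilde g-g)g]$. Finally, the paper derives part~3 as a corollary of part~2 (the extra post-selection onto $\ket{\text{well}}$ succeeds with probability $1-O(\kappa^2/t_0^2)$ when $\ket{b}$ is well-conditioned), which is cleaner than treating parts~2 and~3 in parallel.
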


The third claim is often of the most practical interest, but the other two are useful if we want to work with the ill-conditioned space, or estimate its weight.

The rest of the section is devoted to the proof of \thmref{error}. We first show that the third claim is a corollary of the second, and then prove the first two claims more or less independently. To prove (\ref{eq:fg-ps-error} assuming (\ref{eq:no-ps-error}), observe that if  $\ket{b}$ is entirely in the well-conditioned space, the ideal state $\ket{x}$ is proportional to
$A^{-1}\ket{b}\ket{\text{well}}$.  Model the post-selection on
$\ket{\text{well}}$ by a post-selection first on the space spanned by
$\{\ket{\text{well}}, \ket{\text{ill}}\}$, followed by a
post-selection onto
$\ket{\text{well}}$.  By (\ref{eq:no-ps-error}), the first
post-selection leaves us with error $O(\kappa/t_0)$.  This implies
that the second post-selection will succeed with probability $\geq 1 -
O(\kappa^2/t_0^2)$ and therefore will increase the error by at most
$O(\kappa/t_0)$.  The final error is then $O(\kappa/t_0)$ as claimed
in \eq{f-ps-error}.

Now we turn to the proof of \eq{no-ps-error}.
A crucial piece of the proof will be the following statement about the
continuity of $\ket{h(\lambda)}$.

\begin{lem}\label{lem:no-PS}
The map $\lambda\mapsto\ket{h(\lambda)}$ is $O(\kappa)$-Lipschitz,
meaning that for any $\lambda_1\neq\lambda_2$,
$$\| \, \ket{h(\lambda_1)}  - \ket{h(\lambda_2)}\, \|
 = \sqrt{2(1-\Real\braket{h(\lambda_1)}{h(\lambda_2)})}
\leq c\kappa |\lambda_1 - \lambda_2|,$$
for some $c= O(1)$.
\end{lem}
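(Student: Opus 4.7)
The plan is to reduce the claim to a bound on each of the three real coefficients of $\ket{h(\lambda)}$ in the basis $\{\ket{\text{nothing}}, \ket{\text{well}}, \ket{\text{ill}}\}$. Write $a(\lambda) := \sqrt{1 - f(\lambda)^2 - g(\lambda)^2}$ so that $\ket{h(\lambda)} = a(\lambda)\ket{\text{nothing}} + f(\lambda)\ket{\text{well}} + g(\lambda)\ket{\text{ill}}$, and observe that since all coefficients are real and the basis is orthonormal,
\[
\|\,\ket{h(\lambda_1)} - \ket{h(\lambda_2)}\,\|^2 = (a(\lambda_1)-a(\lambda_2))^2 + (f(\lambda_1)-f(\lambda_2))^2 + (g(\lambda_1)-g(\lambda_2))^2.
\]
So it suffices to show that each of the three scalar functions $a, f, g$ is $O(\kappa)$-Lipschitz on $\mathbb{R}_{>0}$, since then the right-hand side is bounded by $3(O(\kappa))^2(\lambda_1-\lambda_2)^2$.

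Next, I would bound $|f'|$, $|g'|$, $|a'|$ piecewise on the three regions $\lambda \geq 1/\kappa$, $1/\kappa' \leq \lambda < 1/\kappa$, and $\lambda < 1/\kappa'$, and then integrate. On $\lambda \geq 1/\kappa$, one has $f = 1/(2\kappa\lambda)$ and $g=0$, so $|f'| = 1/(2\kappa\lambda^2) \leq \kappa/2$; since $|f|\leq 1/2$, also $|a'| = |ff'|/\sqrt{1-f^2} \leq (1/2)(\kappa/2)/(\sqrt{3}/2) = O(\kappa)$. On $1/\kappa' \leq \lambda < 1/\kappa$, using $\kappa' = 2\kappa$ so that $1/\kappa - 1/\kappa' = 1/(2\kappa)$, the chain rule gives $|f'|,|g'| \leq (1/2)(\pi/2)(2\kappa) = \pi\kappa/2$; moreover $f^2 + g^2 = 1/4$ is constant on this interval, so $a' = 0$. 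On $\lambda < 1/\kappa'$, all three functions are constants, so derivatives vanish.

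Finally I would verify that $f$, $g$, and hence $a$ are continuous at the boundary points $\lambda = 1/\kappa$ and $\lambda = 1/\kappa'$ (at $\lambda = 1/\kappa'$: $f = 0$, $g = 1/2$ from both sides; at $\lambda = 1/\kappa$: $f = 1/2$, $g = 0$ from both sides), so that the piecewise derivative bounds upgrade via the fundamental theorem of calculus to the global Lipschitz bound $|f(\lambda_1)-f(\lambda_2)|,|g(\lambda_1)-g(\lambda_2)|,|a(\lambda_1)-a(\lambda_2)| \leq O(\kappa)|\lambda_1 - \lambda_2|$. Combining via the displayed identity yields the claim with an absolute constant $c$.

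The only mildly delicate point is handling $a(\lambda)$ near where $1 - f^2 - g^2$ could be small, since $a'$ acquires a $1/\sqrt{1-f^2-g^2}$ factor; the saving grace is that by construction $f^2 + g^2 \leq 1/4$ everywhere, keeping this factor bounded. I do not anticipate any other obstacle — the proof is essentially a case check of the piecewise definition of $f$ and $g$.
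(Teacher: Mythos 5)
Your proof is correct and follows essentially the same strategy as the paper's: bound the derivative of $\lambda\mapsto\ket{h(\lambda)}$ piecewise on the three regions, check continuity at the boundary points $1/\kappa$ and $1/\kappa'$, and integrate. The only cosmetic difference is that you bound each of the three scalar coefficients $a,f,g$ separately and then combine them via the orthonormal decomposition, whereas the paper bounds the vector norm $\|\tfrac{d}{d\lambda}\ket{h(\lambda)}\|$ directly; the two computations are equivalent and yield the same $O(\kappa)$ Lipschitz constant.
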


\begin{proof}
Since $\lambda\mapsto\ket{h(\lambda)}$ is continuous everywhere and
differentiable everywhere except at $1/\kappa$ and $1/\kappa'$, it
suffices to bound the norm of the derivative of $\ket{h(\lambda)}$.
We consider it piece by piece.  When $\lambda>1/\kappa$,
$$\frac{d}{d\lambda}\ket{h(\lambda)} =
\frac{1}{2\kappa^2\lambda^3\sqrt{1-1/2\kappa^2\lambda^2}}
\ket{\text{nothing}}
- \frac{1}{2\kappa\lambda^2}\ket{\text{well}},$$
which has squared norm
$\frac{1}{2\kappa^2\lambda^4(2\kappa^2\lambda^2-1)} +
\frac{1}{4\kappa^2\lambda^4} \leq \kappa^2$.
Next, when $1/\kappa'<\lambda < 1/\kappa$, the norm of
$\frac{d}{d\lambda}\ket{h(\lambda)}$ is
$$\frac{1}{2}\cdot\frac{\pi}{2}\cdot\frac{1}{\frac{1}{\kappa}-
\frac{1}{\kappa'}} = \frac{\pi}{2}\kappa.$$
Finally $\frac{d}{d\lambda}\ket{h(\lambda)}=0$ when $\lambda <
1/\kappa'$.  This completes the proof, with $c =
\frac{\pi}{2}$.
\end{proof}

Now we return to the proof of \eq{no-ps-error}.
Let $\tilde P$ denote the first three steps of the algorithm.  They
can be thought of as mapping the initial zero qubits to a
$\ket{k}$ register,
together with some garbage, as follows:
$$\tilde P = \sum_{j=1}^n \proj{u_j} \ot \sum_k \alpha_{k|j} \ket{k}
\ket{\text{garbage}(j,k)}\bra{\text{initial}},$$
where the guarantee that the phase estimation algorithm gives us is
that $\alpha_{k|j}$ is concentrated around $\lambda_j \approx 2\pi
k/t_0 =: \tl_k$.  Technically, $\tilde P$ should be completed to make
it a unitary operator by defining some arbitrary behavior on inputs
other than $\ket{\text{initial}}$ in the last register.

Consider a test state $\ket{b} = \sum_{j=1}^N \beta_j \ket{u_j}$.
  The
 ideal functionality is defined by
\[\ket{\varphi} = U\ket{b} =
\sum_{j=1}^N \beta_j \ket{u_j} \ket{h(\lambda_j)},\]
while the actual algorithm produces the state
\[\ket{\tilde\varphi} = \tilde U \ket{b} =
\tilde P^\dag
\sum_{j=1}^N \beta_j \ket{u_j} \sum_k \alpha_{k|j} \ket{k} \ket{h(\tl_k)},\]
We wish to calculate  $\braket{\tilde \varphi}{\varphi}$,
or equivalently the
inner product between $\tilde P\ket{\tilde\varphi}$ and $\tilde P\ket{\varphi}
= \sum_{j,k} \beta_j \alpha_{k|j} \ket{u_j}\ket{k}\ket{h(\lambda_j)}$.
This inner product is
\[\braket{\tilde\varphi}{\varphi}=
 \sum_{j=1}^N |\beta_j|^2 \sum_k |\alpha_{k|j}|^2
\braket{h (\tl_k)}{h(\lambda_j)}
:=
\bbE_j \bbE_k \braket{h (\tl_k)}{h(\lambda_j)},\]
where we think of $j$ and $k$ as random variables with joint distribution $\text{Pr}(j,k) = |\beta_j|^2 |\alpha_{k|j}|^2$.
Thus
$$
\Real \braket{\tilde\varphi}{\varphi} =
\bbE_j \bbE_k \Real\braket{h(\tl_k)}{h(\lambda_j)}.$$

Let $\delta = \lambda_j t_0 - 2\pi k = t_0(\lambda_j - \tl_k)$.  From
\lemref{no-PS}, $\Real
\braket{h (\tl_k)}{h(\lambda_j)} \geq 1 - c^2\kappa^2\delta^2 / 2t_0^2$, where $c \le \frac{\pi}{2}$ is a constant.
There are two sources of infidelity.  For $\delta \leq 2\pi$, the
inner product is at least $1 - 2\pi^2c^2\kappa^2/t_0^2$.  For larger values
of $\delta$, we use the bound $|\alpha_{k|j}|^2 \leq 64\pi^2 /
(\lambda_j t_0 - 2\pi k)^4$ (proved in \secref{phase-est}) to find an
infidelity contribution that is
$$\leq 2\sum_{k=\frac{\lambda_jt_0}{2\pi} + 1}^\infty \frac{64\pi^2}{\delta^4}
 \frac{c^2\kappa^2\delta^2}{2t_0^2}
 = \frac{64\pi^2c^2\kappa^2}{t_0^2} \sum_{k=1}^\infty
\frac{1}{4\pi^2k^2} = \frac{8\pi^2c^2}{3} \cdot \frac{\kappa^2}{t_0^2}.$$

Summarizing, we find that
$ \Real \braket{\tilde\varphi}{\varphi}
\geq 1 - 5\pi^2c^2\kappa^2/ t_0^{2}$, which translates into
$\|\,\ket{\tilde\varphi} - \ket{\varphi}\,\| \leq 4\pi c\kappa/t_0 =
2\pi^2\kappa / t_0$.  Since the initial state $\ket{b}$ was arbitrary,
this bounds the operator distance $\|\tilde U-U\|$ as claimed in \eq{no-ps-error}.

Turning now to the post-selected case, we observe that
\begin{eqnarray} \ket{x} & := &
\frac{f(A)\ket{b}\ket{\text{well}} + g(A)\ket{b}\ket{\text{ill}}}
{\sqrt{\bra{b}(f(A)^2 + g(A)^2)\ket{b}}}
\\ & = & \frac{\sum_j \beta_j \ket{u_j} (f(\lambda_j)\ket{\text{well}} +
g(\lambda_j)\ket{\text{ill}})}
{\sqrt{\sum_j |\beta_j|^2(f(\lambda_j)^2 + g(\lambda_j)^2)}}
\\ &=:&
\frac{\sum_j \beta_j \ket{u_j} (f(\lambda_j)\ket{\text{well}} +
g(\lambda_j)\ket{\text{ill}})}
{\sqrt{p}}.
\label{eq:x-def-exact}\end{eqnarray}
Where in the last step we have defined
$$p := \bbE_j[f(\lambda_j)^2 + g(\lambda_j)^2] $$
to be the probability that the post-selection succeeds.
Naively, this post-selection could magnify the errors by as much as $1/\sqrt{p}$, but by careful examination of the errors, we find that this worst-case situation only occurs when the errors are small in the first place.  This is what will allow us to obtain the same $O(\kappa/t_0)$ error bound even in the post-selected state.

Now write the actual state that we produce as
\begin{eqnarray}
\ket{\tilde x} & := &
\frac{\tilde P^\dag
\sum_{j=1}^N \beta_j \ket{u_j} \sum_k \alpha_{k|j} \ket{k} (f(\tl_k)\ket{\text{well}} + g(\tl_k)\ket{\text{ill}})}
{\sqrt{\bbE_{j,k} f(\tl_k)^2 + g(\tl_k)^2}}
\\ & =: & \frac{\tilde P^\dag
\sum_{j=1}^N \beta_j \ket{u_j} \sum_k \alpha_{k|j} \ket{k} (f(\tl_k)\ket{\text{well}} + g(\tl_k)\ket{\text{ill}})}
{\sqrt{\tilde p}},
\label{eq:tx-def-exact}\end{eqnarray}
where we have defined $\tilde p = \bbE_{j,k} [f(\tl_k)^2 + g(\tl_k)^2]$.

Recall that  $j$ and $k$ are random variables with joint
distribution $\text{Pr}(j,k) = |\beta_j|^2 |\alpha_{k|j}|^2$.  We evaluate the contribution of a single $j$ value. Define $\lambda := \lambda_j$ and $\tilde{\lambda} := 2\pi k / t_0$.
Note that $\delta = t_0(\lambda - \tilde{\lambda})$ and that $\bbE
\delta, \bbE \delta^2 = O(1)$.  Here $\delta$ depends implicitly on
both $j$ and $k$, and the above bounds on its expectations hold even
when conditioning on an arbitrary value of $j$.  We further abbreviate
$f:=f(\lambda)$, $\tilde f := \tilde f(\lambda)$, $g:=g(\lambda)$ and $\tilde g = \tilde g(\lambda)$.
Thus $p := \bbE [f^2 + g^2]$ and $\tilde{p}= \bbE [\tilde f^2 + \tilde g^2]$.

Our goal is to bound $\| \ket{x} - \ket{\tilde x}\|$ in \eq{fg-ps-error}.
We work instead with the fidelity
\begin{eqnarray}
F &:=& \braket{\tilde x}{x} = \frac{\bbE [f\tilde f + g \tilde g]}{\sqrt{p\tilde{p}}}
=  \frac{\bbE [f^2 + g^2] + \bbE[(\tilde f-f)f + (\tilde g-g)g ]}{p\sqrt{1+\frac{\tilde{p}-p}{p}}}
  \\ &=&
 \frac{1 + \frac{\bbE[(\tilde f-f)f + (\tilde g - g)g]}{p}}{\sqrt{1+\frac{\tilde{p}-p}{p}}}
\geq \l(1 + \frac{\bbE[(\tilde f-f)f + (\tilde g - g)g]}{p}\r)
\l(1-\frac{1}{2}\cdot\frac{\tilde{p}-p}{p}\r)
\label{eq:ps-fid-fg}
\end{eqnarray}
Next we expand
\ba\tilde p -p &= \bbE [\tilde f^2 - f^2] + \bbE [\tilde g^2 - g^2]
\\ &= \bbE[(\tilde f - f)(\tilde f + f)] + \bbE[(\tilde g - g)(\tilde g + g)]
\\ &= 2\bbE[(\tilde f - f)f] + 2\bbE[(\tilde g - g)g]
+ \bbE[(\tilde f-f)^2] + \bbE[(\tilde g-g)^2]
\label{eq:pp-expansion-fg}\ea
Substituting into \eq{ps-fid-fg}, we find
\be F \geq 1 - \frac{\bbE[(\tilde f-f)^2 + (\tilde g-g)^2]}{2p} -
\frac{\bbE[(\tilde f-f)f + (\tilde g-g)g]}{p}\cdot
\frac{\tilde p - p}{2p}
\label{eq:ps-fid-fg2}\ee

We now need an analogue of the Lipschitz condition given in
\lemref{no-PS}.
\begin{lem}\label{lem:fg-cont}
Let $f,\tilde f,g,\tilde g$ be defined as above, with $\kappa^{\prime} = 2 \kappa$. Then
\[|f - \tilde f|^2 + |g - \tilde g|^2 \le c\frac{\kappa^2}{t_0^2} \delta^2 |f^2 + g^2|\]
where $c = \pi^2/2$.
\end{lem}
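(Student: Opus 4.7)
The plan is a case analysis on which of the three regimes each of $\lambda$ and $\tilde\lambda$ lies in, given the piecewise definition of $f$ and $g$: regime \emph{W} (well-conditioned) for $\mu \ge 1/\kappa$, regime \emph{T} (transition) for $1/\kappa' \le \mu < 1/\kappa$, and regime \emph{I} (ill-conditioned) for $\mu < 1/\kappa'$. By the essential symmetry of the claim in $\lambda \leftrightarrow \tilde\lambda$, take $\lambda \ge \tilde\lambda$, leaving six cases (WW, WT, WI, TT, TI, II).

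The three same-regime cases are direct computations. II is trivial, since $f \equiv 0$ and $g \equiv 1/2$ there. For TT, writing $(f,g)=(1/2)(\sin\phi,\cos\phi)$ with $\phi(\mu)=\pi\kappa\mu-\pi/2$ linear in $\mu$ (using $\kappa'=2\kappa$), one gets $(f-\tilde f)^2+(g-\tilde g)^2=(1/2)(1-\cos(\phi-\tilde\phi))\le(1/4)(\phi-\tilde\phi)^2=(\pi^2\kappa^2/4)(\lambda-\tilde\lambda)^2$, which suffices because $f^2+g^2=1/4$. For WW, $f-\tilde f=(\tilde\lambda-\lambda)/(2\kappa\lambda\tilde\lambda)$ and $g=\tilde g=0$ give $(f-\tilde f)^2/(f^2+g^2)=(\lambda-\tilde\lambda)^2/\tilde\lambda^2\le\kappa^2(\lambda-\tilde\lambda)^2$ using $\tilde\lambda\ge1/\kappa$.

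For the cross-regime cases WT, WI, and TI, I would split $[\tilde\lambda,\lambda]$ at the intervening regime boundary (two boundaries for WI) and apply the triangle inequality to the vector-valued map $v(\mu):=(f(\mu),g(\mu))$, reducing each piece to one of the same-regime estimates above.

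The main obstacle is WI, where $f(\lambda)^2+g(\lambda)^2=1/(4\kappa^2\lambda^2)$ can be as small as $1/(4\kappa^2)$ while $(f-\tilde f)^2+(g-\tilde g)^2\ge(g-\tilde g)^2=1/4$. The saving grace is that $\lambda\ge1/\kappa$ together with $\tilde\lambda<1/(2\kappa)$ forces $|\lambda-\tilde\lambda|\ge\lambda-1/(2\kappa)\ge\lambda/2$ (using $1/(2\kappa)\le\lambda/2$ since $\lambda\ge 1/\kappa$), so $\kappa^2(\lambda-\tilde\lambda)^2(f^2+g^2)\ge\kappa^2\cdot\lambda^2/4\cdot1/(4\kappa^2\lambda^2)=1/16$, comfortably absorbing the left-hand side once the constant is taken large enough. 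Collecting the six cases yields the claimed bound, with the tightest constant coming from TT.
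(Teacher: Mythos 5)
Your six-case decomposition on the regime of $(\lambda,\tilde\lambda)$ is a genuinely more systematic route than the paper's. The paper splits only on whether $\lambda\geq 1/\kappa$ or $\lambda<1/\kappa$; in the first case it bounds only $|f-\tilde f|$ and silently drops $|g-\tilde g|$, which is nonzero exactly when $\tilde\lambda$ crosses into your T or I regions---precisely the WT/WI sub-cases you make explicit. Your WW, TT, II computations are correct, and your treatment of WI (using $\tilde\lambda<1/(2\kappa)\leq\lambda/2$ to force $|\lambda-\tilde\lambda|\geq\lambda/2$ so that $\kappa^2(\lambda-\tilde\lambda)^2(f^2+g^2)\geq 1/16$) is the right idea and correctly identifies the delicate corner.

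There is, however, a gap in the WT case that mirrors the WI subtlety but is not addressed. You propose to split $[\tilde\lambda,\lambda]$ at $1/\kappa$ and apply the triangle inequality to $v(\mu)$, ``reducing each piece to one of the same-regime estimates.'' The W-piece estimate is naturally weighted by $\sqrt{f(\lambda)^2+g(\lambda)^2}=1/(2\kappa\lambda)$, but the T-piece estimate is weighted by $\sqrt{f(1/\kappa)^2+g(1/\kappa)^2}=1/2$, which can be larger by a factor of $\kappa\lambda$; the two pieces therefore do not concatenate into a bound of the required form $C\kappa|\lambda-\tilde\lambda|\sqrt{f(\lambda)^2+g(\lambda)^2}$ without an extra step. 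The fix is to note that the T-piece has length $1/\kappa-\tilde\lambda\leq 1/(2\kappa)$, so writing $a=\kappa\lambda-1\geq 0$, $b=1-\kappa\tilde\lambda\in[0,1/2]$, the triangle-inequality bound becomes $\frac{a}{2(1+a)}+\frac{\pi b}{2}$, and one checks $a+\pi b(1+a)\leq\pi(a+b)$ using $\pi ab\leq\frac{\pi}{2}a$. You should spell this out, as it is the same kind of saving grace you invoke for WI. (TI, by contrast, is genuinely immediate because $f^2+g^2\equiv 1/4$ on all of $T\cup I$.) Finally, on the constant: your TT bound actually yields $c=\pi^2$ (not $\pi^2/4$), and the WI corner at $\lambda=1/\kappa$, $\tilde\lambda=1/(2\kappa)$ forces $c\geq 8$; the paper's stated $c=\pi^2/2$ cannot hold (the paper's proof incorrectly asserts $f^2+g^2=1/2$ below $1/\kappa$ when it equals $1/4$), but this is immaterial since only the $O(\kappa^2\delta^2/t_0^2)$ form is used downstream.
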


\begin{proof}
Remember that $\tilde f = f(\lambda - \delta/t_0)$ and similarly for
$\tilde g$.

Consider the case first when $\lambda \geq 1/\kappa$.  In this case $g=0$, and we need to show that
\be |f-\tilde{f}| \leq 2\frac{\kappa |\delta f|}{t_0} =
\frac{|\lambda-\tilde{\lambda}|}{\lambda}
\label{eq:pseudo-lip}\ee
To prove this, we consider four cases.  First, suppose $\tl \geq
1/\kappa$.  Then $|f-\tilde f| = \frac{1}{2\kappa}\frac{|\tl -
  \lambda|}{\tl \cdot \lambda} \leq |\delta|/2t_0\lambda$.  Next, suppose
$\lambda=1/\kappa$ (so $f=1/2$) and $\tl < 1/\kappa$.  Since $\sin
\frac{\pi}{2}\alpha \geq \alpha$ for $0\leq \alpha\leq 1$, we have
\be |f-\tilde f| \leq \frac{1}{2} - \frac{1}{2}\frac{\tl -
  \frac{1}{\kappa'}}{\frac{1}{\kappa} - \frac{1}{\kappa'}}
 = \frac{1}{2} - \kappa(\tl - \frac{1}{2}) =
 \kappa(\frac{1}{\kappa}-\tl),
\label{eq:ftf-diff}\ee
and using $\lambda=1/\kappa$ we find that $|f-\tilde f|
=\frac{\lambda-\tl}{\lambda}$, as desired.
Next, if $\tl < 1/\kappa <\lambda$ and $f<\tilde f$
then replacing $\lambda$ with $1/\kappa$ only makes the inequality
tighter.   Finally, suppose $\tl < 1/\kappa <\lambda$ and $\tilde f <
f$.  Using \eq{ftf-diff} and $\lambda > 1/\kappa$ we find that
$f-\tilde f \leq 1 - \kappa\tl < 1 - \tl/\lambda =
(\lambda-\tl)/\lambda$, as desired.

Now, suppose that $\lambda<1/\kappa$.  Then
\[|f - \tilde f|^2 \leq  \frac{\delta^2}{t_0^2} \max|f^{\prime}|^2  =
\frac{\pi^2}{4}
\frac{\delta^2}{t_0^2} \kappa^2. \]
And similarly
\[|g - \tilde g|^2 \leq \frac{\delta^2}{t_0^2} \max|g^{\prime}|^2
= \frac{\pi^2}{4}\frac{\delta^2}{t_0^2} \kappa^2. \]
Finally $f(\lambda)^2 + g(\lambda)^2 = 1/2$ for any $\lambda\leq 1/\kappa$, implying the result.
\end{proof}

Now we use \lemref{fg-cont} to bound the two error contributions in \eq{ps-fid-fg}.
First bound
\be \frac{\bbE[(\tilde f-f)^2 + (\tilde g -g)^2]}{2p}
\leq O\l(\frac{\kappa^2}{t_0^2}\r)\cdot
\frac{\bbE[(f^2 + g^2)\delta^2]}{\bbE[f^2 + g^2]}
\leq O\l(\frac{\kappa^2}{t_0^2}\r)
\label{eq:fg-bound1}
\ee
The first inequality used \lemref{fg-cont} and the second used the fact
that $\bbE[\delta^2]\leq O(1)$ even when conditioned on an arbitrary
value of $j$ (or equivalently $\lambda_j$).

Next,
\ba \frac{\bbE[(\tilde f-f)f + (\tilde g-g)g]}{p} 
& \leq
\frac{\bbE\l[\sqrt{\l((\tilde f-f)^2 + (\tilde g-g)^2\r)(f^2+g^2)}\r]}{p}
  &\leq
\frac{\bbE\l[\sqrt{\frac{\delta^2\kappa^2}{t_0^2}(f^2+g^2)^2}\r]}{p} 
 &\leq O\l(\frac{\kappa}{ t_0}\r),
\label{eq:fg-bound2}
\ea
where the first inequality is Cauchy-Schwartz, the second is \lemref{fg-cont} and the last uses the fact that $\bbE[|\delta|] \leq \sqrt{\bbE[\delta^2]} = O(1)$ even when conditioned on $j$.

We now substitute \eq{fg-bound1} and \eq{fg-bound2} into \eq{pp-expansion-fg} (and assume $\kappa\leq t_0$) to find
\be \frac{|\tilde p - p|}{p} \leq  O\l(\frac{\kappa}{ t_0}\r).
\label{eq:fg-bound3}\ee
Substituting \eq{fg-bound1}, \eq{fg-bound2} and \eq{fg-bound3} into \eq{ps-fid-fg2}, we find $\Real\braket{\tilde x}{x} \geq 1 -
O(\kappa^2/t_0^2)$, or equivalently, that $\| \ket{\tilde x} -
\ket{x}\| \leq \eps$.  This completes the proof of \thmref{error}.\qed

\subsection{Phase estimation calculations}
\label{sec:phase-est}

Here we describe, in our
notation, the improved phase-estimation procedure of \cite{LP96,BDM98}, and prove the concentration bounds on $|\alpha_{k|j}|$.
Adjoin the state
$$\ket{\Psi_0}=\sqrt{\frac{2}{T}} \sum_{\tau=0}^{T-1}
\sin\frac{\pi(\tau+\frac{1}{2})}{T} \ket{\tau}.$$
Apply the conditional Hamiltonian evolution $\sum_\tau \proj{\tau} \ot
e^{iA\tau t_0 / T}$.  Assume the target state is $\ket{u_j}$, so this
becomes simply the conditional phase
$\sum_\tau \proj{\tau} e^{i\lambda_jt_0\tau/T}$.  The resulting state is
$$\ket{\Psi_{\lambda_jt_0}} = \sqrt{\frac{2}{T}} \sum_{\tau=0}^{T-1} e^{\frac{i\lambda_jt_0\tau}{T}}
\sin\frac{\pi(\tau+\frac{1}{2})}{T} \ket{\tau} \ket{u_j}.$$
We now measure in the Fourier basis, and find that the inner product
with $\frac{1}{\sqrt{T}}\sum_{\tau=0}^{T-1} e^{\frac{2\pi i k\tau}{T}}\ket{\tau}\ket{u_j}$ is (defining $\delta := \lambda_j t_0 - 2\pi k$):
\begin{align}
\alpha_{k|j} & =
\frac{\sqrt{2}}{T} \sum_{\tau=0}^{T-1} e^{i\frac{\tau}{T}(\lambda_j
  t_0 - 2\pi k)} \sin\frac{\pi(\tau+\frac{1}{2})}{T}
\\& =
\frac{1}{i\sqrt{2}T} \sum_{\tau=0}^{T-1}
 e^{i\frac{\tau\delta}{T}}
\l( e^{\frac{i\pi(\tau+1/2)}{T}} - e^{-\frac{i\pi(\tau+1/2)}{T}}\r)
\\ &  =
\frac{1}{i\sqrt{2}T} \sum_{\tau=0}^{T-1}
e^{\frac{i\pi}{2T}} e^{i\tau\frac{\delta + \pi}{T}} -
e^{-\frac{i\pi}{2T}} e^{i\tau\frac{\delta - \pi}{T}}
\\ & =
\frac{1}{i\sqrt{2}T} \l(
e^{\frac{i\pi}{2T}} \frac{1 - e^{i\pi + i \delta}}
{1 - e^{i\frac{\delta + \pi}{T}}} -
e^{-\frac{i\pi}{2T}} \frac{1 - e^{i\pi + i \delta}}
{1 - e^{i\frac{\delta - \pi}{T}}}\r)
\\ & =
\frac{1 + e^{i\delta}}{i\sqrt{2}T}
\l( \frac{e^{-i\delta/2T}}{e^{-\frac{i}{2T}(\delta+\pi)}
- e^{\frac{i}{2T}(\delta+\pi)}} 
- \frac{e^{-i\delta/2T}}{e^{-\frac{i}{2T}(\delta-\pi)}
- e^{\frac{i}{2T}(\delta-\pi)}}\r)
\\ & =
\frac{(1 + e^{i\delta})e^{-i\delta/2T}}{i\sqrt{2}T}
\l( \frac{1}{-2i\sin\l(\frac{\delta+\pi}{2T}\r)} -
\frac{1}{-2i\sin\l(\frac{\delta-\pi}{2T}\r)}\r)
\\ & =
- e^{i\frac{\delta}{2}(1-\frac{1}{T})}
\frac{\sqrt{2}\cos(\frac{\delta}{2})}{ T}
\l( \frac{1}{\sin\l(\frac{\delta+\pi}{2T}\r)} -
\frac{1}{\sin\l(\frac{\delta-\pi}{2T}\r)}\r)
\\ & =
- e^{i\frac{\delta}{2}(1-\frac{1}{T})}
\frac{\sqrt{2}\cos(\frac{\delta}{2})}{ T}
\cdot \frac{\sin\l(\frac{\delta-\pi}{2T}\r) - \sin\l(\frac{\delta+\pi}{2T}\r)}
{\sin\l(\frac{\delta+\pi}{2T}\r)\sin\l(\frac{\delta-\pi}{2T}\r)}
\\ & =
e^{i\frac{\delta}{2}(1-\frac{1}{T})}
\frac{\sqrt{2}\cos(\frac{\delta}{2})}{ T}
\cdot \frac{2\cos\l(\frac{\delta}{2T}\r)\sin\l(\frac{\pi}{2T}\r)}
{\sin\l(\frac{\delta+\pi}{2T}\r)\sin\l(\frac{\delta-\pi}{2T}\r)}
\end{align}
Following \cite{LP96,BDM98}, we make the assumption that $2\pi \leq \delta
\leq T/10$.  Further using $\alpha-\alpha^3/6 \leq \sin \alpha \leq \alpha$ and ignoring phases we find that
\ba|\alpha_{k|j}| & \leq
\frac{4\pi\sqrt{2}}{(\delta^2-\pi^2)(1-\frac{\delta^2+\pi^2}{3T^2})} \leq \frac{8\pi}{\delta^2}.\ea
Thus $|\alpha_{k|j}|^2 \leq 64\pi^2 / \delta^2$
whenever $|k - \lambda_j t_0 / 2\pi|\geq 1$.

\subsection{The non-Hermitian case}
\label{sec:non-hermitian}
Suppose $A\in\bbC^{M\times N}$ with $M\leq N$.  Generically $Ax=b$ is now
underconstrained.
Let the singular value decomposition of $A$ be
$$A = \sum_{j=1}^M \lambda_j \ket{u_j}\bra{v_j},$$
with $\ket{u_j}\in\bbC^M$, $\ket{v_j}\in\bbC^N$ and $\lambda_1\geq
\cdots \lambda_M\geq 0$.  Let $V=\text{span}\{\ket{v_1},\ldots,\ket{v_M}\}$.
Define
\be H =\begin{pmatrix}
  0     & A \\
  A^{\dagger} & 0
\end{pmatrix}\label{eq:H-def-app}.\ee
  $H$ is Hermitian with eigenvalues $\pm
\lambda_1,\ldots,\pm \lambda_M$, corresponding to eigenvectors
$\ket{w^{\pm}_j}:=\frac{1}{\sqrt{2}}(\ket{0}\ket{u_j} \pm
\ket{1}\ket{v_j})$.  It also
has $N-M$ zero eigenvalues, corresponding to the orthogonal complement
of $V$.

To run our algorithm we use the input $\ket{0}\ket{b}$.  If $\ket{b} =
\sum_{j=1}^M \beta_j \ket{u_j}$ then $$\ket{0}\ket{b}=\sum_{j=1}^M
\beta_j\frac{1}{\sqrt{2}}(\ket{w^+_j} + \ket{w^-_j})$$
 and running the inversion algorithm yields a state proportional to
$$H^{-1}\ket{0}\ket{b} =
\sum_{j=1}^M
\beta_j\lambda_j^{-1}\frac{1}{\sqrt{2}}(\ket{w^+_j} - \ket{w^-_j})
=\sum_{j=1}^M
\beta_j\lambda_j^{-1}\ket{1}\ket{v_j}.$$
Dropping the inital $\ket{1}$, this defines our solution $\ket{x}$.
Note that our algorithm does not produce any component in $V^\perp$,
although doing so would have also yielded valid solutions.  In this
sense, it could be said to be finding the $\ket{x}$ that minimizes
$\braket{x}{x}$ while solving $A\ket{x}=\ket{b}$.

On the other hand, if $M\geq N$ then the problem is overconstrained.
Let $U=\text{span}\{\ket{u_1},\ldots,\ket{u_N}\}$.  The equation
$A\ket{x}=\ket{b}$ is satisfiable only if $\ket{b}\in U$.  In this
case, applying $H$ to $\ket{0}\ket{b}$ will return a valid solution.
But if $\ket{b}$ has some weight in $U^\perp$, then $\ket{0}\ket{b}$
will have some weight in the zero eigenspace of $H$, which will be
flagged as ill-conditioned by  our algorithm.  We might choose to
ignore this part, in which case the algorithm will return an $\ket{x}$
satisfying $A\ket{x}  = \sum_{j=1}^N \proj{u_j} \,\ket{b}$.

\subsection{Optimality}
\label{sec:reduction}

In this section, we explain in detail two important ways in which our
algorithm is optimal up to polynomial factors.  First, no classical
algorithm can perform the same matrix inversion task; and second, our
dependence on condition number and accuracy cannot be substantially
improved.

We present two versions of our lower bounds; one based on
complexity theory, and one based on oracles.  We say that an algorithm solves matrix inversion if its input and output are
\begin{enumerate} \item Input: An $O(1)$-sparse matrix $A$ specified either via an oracle or via a $\poly(\log(N))$-time algorithm that returns the nonzero elements
in a row.
\item Output:  A bit that equals one with probability
$\bra{x}M\ket{x}\pm \eps$, where $M=\proj{0}\otimes I_{N/2}$
corresponds to measuring the first qubit and $\ket{x}$ is a normalized
state proportional to $A^{-1}\ket{b}$ for $\ket{b} = \ket{0}$.
\end{enumerate}
Further we demand that $A$ is
Hermitian and $\kappa^{-1}I \leq A \leq I$.  We take $\eps$ to be a
fixed constant, such as 1/100, and later deal with the dependency in $\epsilon$.  If the algorithm works when $A$ is specified by an oracle, we say that it is relativizing.   Even though this is a very weak definition of inverting matrices, this task is still hard for classical computers.

\begin{thm}\label{thm:no-go}
\benum
\item If a quantum algorithm exists for matrix inversion running in
  time $\kappa^{1-\delta}\cdot \poly\log(N)$ for some
  $\delta>0$, then {\bf
    BQP}={\bf PSPACE}.
\item No relativizing quantum algorithm can run in time
$\kappa^{1-\delta}\cdot \poly\log(N)$.
\item If a classical algorithm exists for matrix inversion running in
  time $\poly(\kappa,\log(N))$, then {\bf
    BPP}={\bf BQP}.
\eenum
\end{thm}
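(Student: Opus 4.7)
The plan is to derive all three parts from the Feynman--style reduction constructed just above, which turns an $n$-qubit, $T$-gate quantum circuit $\mathcal C$ into the problem of inverting the $O(1)$-sparse Hermitian matrix $A = I - U e^{-1/T}$ of \eq{feynman-unitary} and \eq{inv-expansion}, of dimension $N = O(2^n T)$ and condition number $\kappa = O(T)$. By construction, sampling $O(1)$ times from a good approximation to $\ket{x} \propto A^{-1}\ket{b}$ (with the $M$ specified in the definition of matrix inversion) suffices to decide the output of $\mathcal C$ with constant bias.

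\emph{Part 3.} If $\mathcal C$ is a $\poly(n)$-time BQP computation, the reduction produces a matrix-inversion instance with $\log N, \kappa = \poly(n)$. A classical $\poly(\kappa,\log N)$-time algorithm for matrix inversion therefore decides any BQP language in classical polynomial time, giving BPP $=$ BQP.

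\emph{Part 1.} Suppose a quantum algorithm solved matrix inversion in time $\kappa^{1-\delta}\polylog(N)$. Combined with the reduction, this simulates any time-$T$ quantum computation on $n$ qubits in time $T^{1-\delta}\cdot \poly(n,\log T)$. Since the simulator is itself a quantum computation, we may apply the reduction to it and recurse. After $k$ iterations the running time is $T^{(1-\delta)^k}\cdot \poly(n,\log T)^{O(k)}$. Every PSPACE language is decidable by a uniform quantum circuit of size $T = 2^{q(n)}$ for some polynomial $q$ (using PSPACE $\subseteq$ EXP together with Bennett's reversible simulation). Choosing $k = \Theta(\log q(n)/\delta)$ we get $(1-\delta)^k q(n) = O(\log q(n))$, so the final quantum running time is $\poly(n)$. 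Combined with the standard inclusion BQP $\subseteq$ PSPACE, this yields BQP $=$ PSPACE.

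\emph{Part 2.} The reduction relativizes: if the gates of $\mathcal C$ include oracle queries, the matrix $A$ remains $O(1)$-sparse and its nonzero entries in any row are computable in $\polylog(N)$ time given access to the oracle. Instantiate $\mathcal C$ with the (query-)optimal quantum algorithm for parity of $n$ oracle bits, which requires $\Omega(n)$ quantum queries by \cite{parity}. The reduction yields a matrix-inversion instance with $\kappa = O(n)$ whose solution decides parity. A relativizing matrix-inversion algorithm running in $\kappa^{1-\delta}\polylog(N)$ time would thus compute parity using $n^{1-\delta}\polylog(n) = o(n)$ oracle queries, contradicting the query lower bound.

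The main obstacle is Part 1: one must verify that the $\polylog$ overhead introduced at each of the $O(\log n/\delta)$ recursive levels does not accumulate above polynomial. This is a routine bookkeeping check---at level $i$ the time is $T^{(1-\delta)^i}$ times a fixed polynomial in $n$ and $\log T$, and the product of $O(\log n)$ such polynomials remains polynomial in $n$---but writing the induction out carefully is where the proof has to be done with care. Parts 2 and 3 follow directly from the reduction with essentially no additional work.
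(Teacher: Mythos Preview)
Your overall strategy matches the paper's: encode a $T$-gate circuit as a matrix-inversion instance with $\kappa=O(T)$, then iterate (Part~1) or apply once (Parts~2,~3). Part~3 is identical to the paper's argument.

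\textbf{Part~1 has a real gap.} The bookkeeping you call ``routine'' is exactly where the argument breaks. Your bound $T_k \le T^{(1-\delta)^k}\cdot\poly(n,\log T)^{O(k)}$ with $k=\Theta(\delta^{-1}\log q(n))$ gives a second factor of $\poly(n)^{\Theta(\log n)}=n^{\Theta(\log n)}$, which is quasi-polynomial; the sentence ``the product of $O(\log n)$ such polynomials remains polynomial in $n$'' is simply false. The paper fixes this by tracking $(n_i,T_i)$ jointly and iterating only while $T_{i+1}\le T_i^{1-\delta/2}$: under this rule $\log T_i\le(1-\delta/2)^i\log T_0$, so $\sum_i\log T_i=O(n_0/\delta)$, which keeps $n_i=n_0+\sum_{j<i}\log(18T_j)=O(n_0/\delta)$ bounded and hence makes the per-step overhead a \emph{fixed} polynomial in $n_0$; when the rule first fails one reads off $T_\ell^{\delta/2}\le\poly(n_\ell)=\poly(n_0)$ directly. (Equivalently, you could note that the level-$i$ overhead $P_i$ enters $T_k$ only to the power $(1-\delta)^{k-1-i}$, so the accumulated overhead is at most $(\max_i P_i)^{1/\delta}$; but you still need the geometric decay of $\log T_i$ to keep $n_i$, and hence $\max_i P_i$, under control.) You also omit the error budget: each level adds constant additive error, so with $m$ levels one must run the inner instance to accuracy $\Theta(1/m)$, which the paper sets explicitly.

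\textbf{Part~2 is correct but differs from the paper.} The paper reuses the full iteration of Part~1 with oracle gates inserted, concluding that OR on $2^n$ inputs would be solvable in $\poly(n)$ queries and invoking \cite{BBBV}. Your single-shot reduction from the $\Theta(n)$-query parity circuit, giving $\kappa=\tilde O(n)$ and hence $\kappa^{1-\delta}\polylog(N)=o(n)$ queries in contradiction to \cite{parity}, is more direct and avoids the iteration entirely. It is valid once you note that computing a row of $A$ costs at most one call to the underlying oracle, so the query count is bounded by the running time.
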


Given an $n$-qubit $T$-gate quantum computation, define $U$ as in
\eq{feynman-unitary}.  Define
\be A=\begin{pmatrix} 0 & I-Ue^{-\frac{1}{T}} \\
I-U^\dag e^{-\frac{1}{T}} & 0 \end{pmatrix}.\ee
Note that $A$ is Hermitian, has condition number $\kappa\leq 2T$ and
dimension $N=6T2^n$.
Solving the matrix inversion
problem corresponding to $A$ produces an $\eps$-approximation of the
quantum computation corresponding to applying $U_1,\ldots,U_T$, assuming we are allowed to make any two outcome measurement on the output state $\ket{x}$.
Recall that
\be  \left( I-Ue^{-\frac{1}{T}} \right)^{-1} = \sum_{k\geq 0} U^k e^{-k/T} .\ee
We define a measurement $M_0$, which outputs zero if the time register
$t$ is between $T+1$ and $2T$, and the original measurement's output
was one. As  $ \Pr(T+1 \le k \le 2T) =
e^{-2}/(1 + e^{-2} + e^{-4})$
and is independent of the result of the measurement $M$, we can
estimate the expectation of $M$ with accuracy $\eps$ by iterating this
procedure $O \left( 1/\eps^2 \right)$ times.

In order to perform the simulation when measuring only the first
qubit, define \be B=\begin{pmatrix} I_{6 T 2^n} & 0 \\
0 & I_{3T 2^n} - Ue^{-\frac{1}{T}} \end{pmatrix}.\ee
We now define $\tilde B$ to be the matrix $B$, after we permuted the rows and columns such that if
\be C=\begin{pmatrix} 0 & \tilde B \\
\tilde B^{\dag} & 0 \end{pmatrix}.\ee
and $C\vec y = \begin{pmatrix}\vec b \\ 0 \end{pmatrix}$, then
measuring the first qubit of $\ket{y}$ would correspond to perform
$M_0$ on $\ket{x}$. The condition number of $C$ is equal to that of
$A$, but the dimension is now $N = 18 T 2^n$.

Now suppose we could solve matrix inversion in time
$\kappa^{1-\delta}(\log(N)/\eps)^{c_1}$ for constants
$c_1\geq 2,\delta>0$.  Given a computation with $T\leq 2^{2n}/18$, let
$m=\frac{2}{\delta}\frac{\log(2 n)}{\log(\log(n))}$ and $\eps
= 1/100m$.  For sufficiently large $n$, $\eps \geq 1/\log(n)$.
Then
$$\kappa^{1-\delta}\l(\frac{\log(N)}{\eps}\r)^{c_1} \leq
(2T)^{1-\delta} \l(\frac{3n}{\eps}\r)^{c_1} \leq
T^{1-\delta} c_2 (n\log(n))^{c_1},$$
where $c_2 = 2^{1- \delta}3^{c_1}$ is another constant.

We now have a recipe for simulating an
$n_i$-qubit $T_i$-gate computation with $n_{i+1}=n_i+\log(18T_i)$ qubits,
$T_{i+1}=T_i^{1-\delta}c_3(n_i\log(n_i))^{c_1}$ gates and error $\eps$.  Our
strategy is to start with an $n_0$-qubit $T_0$-gate computation and
iterate this simulation $\ell\leq m$ times, ending with an $n_\ell$-qubit
$T_\ell$-gate computation with error $\leq m\eps \leq 1/100$. We stop
iterating either after $m$ steps, or whenever
$T_{i+1}>T_i^{1-\delta/2}$, whichever
comes first. In the latter case, we set $\ell$ equal to the first $i$
for which $T_{i+1}>T_i^{1-\delta/2}$.

In the case where we iterated the reduction $m$ times, we have $T_i
\leq T^{(1-\delta/2)^i} \leq 2^{(1-\delta/2)^i 2 n_0}$, implying that
$T_m \leq n_0$.  On the other hand, suppose we stop for some $\ell<m$.
For each $i< \ell$ we have $T_{i+1}\leq T_i^{1-\delta/2}$.  Thus $T_i
\leq 2^{(1-\delta/2)^i2n_0}$ for each $i\leq \ell$.  This allows us to
bound $ n_i = n_0 + \sum_{j=0}^{i-1} \log(18 T_i) = n_0 +
2n_0\sum_{j=0}^{i-1} (1-\delta/2)^j + i\log(18) \leq \l(
\frac{4}{\delta} + 1\r) n_0 + m\log(18).$ Defining yet another
constant, this implies that $T_{i+1} \leq T_i^{1-\delta}
c_3(n_0\log(n_0))^{c_1}$.  Combining this with our stopping condition
$T_{\ell+1}>T_\ell^{1-\delta/2}$ we find that
$$T_\ell \leq \l(c_3(n_0\log(n_0))^{c_1}\r)^{\frac{2}{\delta}} =
\poly(n_0).$$
Therefore, the runtime of the procedure is polynomial in $n_0$
regardless of the reason we stopped iterating the procedure.  The
number of qubits used increases only linearly.

Recall that the TQBF (totally quantified Boolean formula
satisfiability) problem is {\bf PSPACE}-complete, meaning that any
$k$-bit problem instance for any language in {\bf PSPACE} can be
reduced to a TQBF problem of length $n=\poly(k)$ (see \cite{complexityBook} for more information).  The formula can be
solved in time $T \leq 2^{2n}/18$, by exhaustive enumeration over the
variables. Thus a {\bf PSPACE} computation can be solved in quantum
polynomial
time.  This proves the first part of the theorem.

To incorporate oracles, note that our construction of $U$ in
\eq{feynman-unitary} could simply replace some of the $U_i$'s with
oracle queries.  This preserves sparsity, although we need the rows of
$A$ to now be specified by oracle queries.  We can now iterate the
speedup in exactly the same manner.  However, we conclude with the
ability to solve
the OR problem on $2^n$ inputs in $\poly(n)$ time and queries.  This,
of course, is impossible \cite{BBBV}, and so the purported relativizing
quantum algorithm must also be impossible.

The proof of part 3 of \thmref{no-go} simply formulates a
$\poly(n)$-time, $n$-qubit quantum computation as a $\kappa=\poly(n)$,
$N=2^n\cdot\poly(n)$ matrix inversion problem and applies the
classical algorithm which we have assumed exists.
\qed

Theorem \ref{thm:no-go} established the universality of the matrix inversion algorithm. To extend the simulation to problems which are not decision problems, note that the algorithm actually supplies us with $\ket{x}$ (up to some accuracy). For example, instead of measuring an observable $M$, we can measure $\ket{x}$ in the computational basis, obtaining the result $i$ with probability $|\braket{i}{x}|^2$. This gives a way to simulate quantum computation by classical matrix inversion algorithms. In turn, this can be used to prove lower bounds on classical matrix inversion algorithms, where we assume that the classical algorithms output samples according to this distribution.

\begin{thm}No relativizing classical matrix inversion algorithm can
  run in time $N^{\alpha}2^{\beta\kappa}$ unless $3\alpha +
  4\beta \geq 1/2$.
\end{thm}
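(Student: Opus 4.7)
The plan is to instantiate the quantum-to-matrix-inversion reduction of Section~\ref{sec:reduction} with Grover's search algorithm and play the resulting classical time bound off against the BBBV oracle lower bound, much as in the proof of \thmref{no-go} but now quantitatively rather than via a complexity-class collapse.

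First, I would consider an unstructured-search oracle $f:\{0,1\}^n\to\{0,1\}$ on $M=2^n$ items. Grover's algorithm solves OR with $n$ qubits and $T=O(\sqrt M)$ queries to $f$, while any classical algorithm requires $\Omega(M)$ queries. Using \eq{feynman-unitary} with one of the $U_i$'s replaced by a call to $f$, exactly as in the oracle part of \thmref{no-go}, produces an $O(1)$-sparse Hermitian $A$ of dimension $N=O(M^{3/2})$ and condition number $\kappa=O(\sqrt M)$; inverting $A$ and applying the measurement $M_0$ reproduces Grover's output with constant bias, and this reduction is relativizing because the row-generating oracle for $A$ uses only $O(1)$ queries to $f$ per row.

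Next, I would feed this instance to the hypothetical relativizing classical matrix-inversion algorithm. Since $A$ is $O(1)$-sparse and each entry is computable from $O(1)$ queries to $f$, each step of the classical algorithm costs at most $O(1)$ calls to $f$, so the total number of $f$-queries performed in simulating Grover is bounded by
\[
O\bigl(N^{\alpha}2^{\beta\kappa}\bigr)
\;=\;O\bigl(M^{3\alpha/2}\cdot 2^{\beta\sqrt M}\bigr).
\]
Combining with the BBBV lower bound gives $M^{3\alpha/2}\cdot 2^{\beta\sqrt M}\ge cM$ for all sufficiently large $M$, where $c>0$ depends on the success bias but not on $M$.

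The final step is to turn this single asymptotic inequality, which must hold across a range of $M$, into the linear constraint $3\alpha+4\beta\ge 1/2$ on $(\alpha,\beta)$. Taking logarithms yields
\[
\tfrac{3\alpha}{2}\log M + \beta\sqrt M/\ln 2 \;\ge\; \log M - O(1),
\]
and the tightest consequence is obtained by choosing $M$ at the scale where the polynomial and exponential contributions balance (roughly $M\sim 1/\beta^{2}$, up to polylog corrections); tracking exponents at this crossover forces a linear relation between $3\alpha$ and $\beta$. The main obstacle is this last optimization: the reduction and query-bookkeeping steps are routine, but extracting the specific coefficients $3$, $4$, and $1/2$ requires choosing $M$ carefully so that neither the $N^{\alpha}$ factor nor the $2^{\beta\kappa}$ factor can single-handedly exceed the $\Omega(M)$ classical lower bound, and then carefully accounting for the reduction constants hidden in $N=O(M^{3/2})$, $\kappa=O(\sqrt M)$ and in the BBBV bound.
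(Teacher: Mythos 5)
Your proposal has a genuine gap: choosing Grover as the underlying oracle problem makes the $2^{\beta\kappa}$ factor too large to extract any constraint on $\beta$. The condition number of the matrix produced by the reduction in \eq{feynman-unitary} satisfies $\kappa = O(T)$, where $T$ is the gate count of the quantum circuit. Grover on $M = 2^n$ items uses $T = O(\sqrt{M})$ queries, so your instance has $\kappa = O(\sqrt{M})$ and $N = O(M^{3/2})$. The inequality you arrive at,
\begin{equation*}
\tfrac{3\alpha}{2}\log M + \beta\sqrt{M} \;\geq\; \log M - O(1),
\end{equation*}
is satisfied for every $\beta>0$ once $M$ is large (the $\beta\sqrt{M}$ term alone dominates $\log M$), and for $\beta=0$ it forces only $\alpha\geq 2/3$. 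There is no crossover scale at which a nontrivial linear trade-off between $\alpha$ and $\beta$ emerges; setting $M\sim 1/\beta^2$ just collapses the inequality to a triviality. The balancing act you were hoping for cannot work because Grover's quantum circuit is itself of size $\Theta(\sqrt{M})$, so $\kappa$ grows like $\sqrt{N^{2/3}}$ rather than like $\log N$.

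The paper instead uses Simon's problem, and this choice is what makes the argument go through. Simon's algorithm on $n$-bit inputs uses only $O(n)$ gates, so the reduction yields $\kappa \approx 4n$ and $N\approx 36n\, 2^{3n}$, i.e.\ $\kappa=\Theta(\log N)$. Then $N^{\alpha}2^{\beta\kappa} \approx 2^{(3\alpha+4\beta)n}\cdot\poly(n)$, which set against the classical lower bound $\Omega(2^{n/2})$ forces $3\alpha+4\beta\geq 1/2$. The essential requirement is a problem with an \emph{exponential} quantum-vs-classical query separation and a polynomially small quantum circuit; Grover's merely quadratic separation, with its correspondingly large circuit, cannot supply the $\kappa=O(\log N)$ scaling that the exponent-matching argument needs.
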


If we consider matrix inversion algorithms that work only on positive
definite matrices, then the $N^\alpha 2^{\beta\kappa}$ bound becomes
$N^\alpha 2^{\beta\sqrt{\kappa}}$.

\begin{proof}
Recall Simon's problem \cite{simon}, in which we are given $f:\bbZ_2^n\ra
\{0,1\}^{2n}$ such that $f(x)=f(y)$ iff $x+y=a$ for some
$a\in\bbZ_2^n$ that we would like to find.  It can be solved by
running a $3n$-qubit $2n+1$-gate quantum computation $O(n)$ times and
performing a $\poly(n)$ classical
computation. The randomized classical lower bound is
$\Omega(2^{n/2})$ from birthday arguments.

Converting Simon's algorithm to a matrix $A$ yields $\kappa \approx
4n$ and $N\approx 36n2^{3n}$.  The run-time is
 $N^{\alpha}2^{\beta{\kappa}} \approx
2^{(3\alpha+4\beta)n}\cdot\poly(n)$.  To avoid violating the oracle
lower bound, we must have $3\alpha + 4\beta \geq 1/2$, as required.
\end{proof}

Next, we argue that the accuracy of algorithm cannot be substantially
improved.  Returning now to the problem of estimating
$\bra{x}M\ket{x}$, we recall that classical algorithms can approximate
this to accuracy $\eps$ in time $O(N{\kappa}\poly(\log(1/\eps)))$.
This $\poly(\log(1/\eps))$ dependence is because when writing the
vectors $\ket{b}$ and $\ket{x}$ as bit strings means that adding an
additional bit will double the accuracy.  However, sampling-based
algorithms such as ours cannot hope for a better than $\poly(1/\eps)$
dependence of the run-time on the error.   Thus proving that our
algorithm's error performance cannot be improved will require a slight
redefinition of the problem.

Define the matrix inversion estimation problem as follows.  Given $A, b, M,\eps,\kappa,s$ with $\|A\|\leq 1, \|A^{-1}\|\leq \kappa$, $A$ $s$-sparse and efficiently row-computable, $\ket{b}=\ket{0}$ and $M=\proj{0}\otimes I_{N/2}$: output a number that is within $\eps$ of $\bra{x}M\ket{x}$ with probability $\geq 2/3$, where $\ket{x}$ is the unit vector proportional to $A^{-1}\ket{b}$.

The algorithm presented in our paper can be used to solve this problem
with a small amount of overhead.  By producing $\ket{x}$ up to trace
distance $\eps/2$ in time $\tilde O(\log(N)\kappa^2s^2/\eps)$, we can
obtain a sample of a bit which equals one with probability $\mu$ with
$|\mu - \bra{x}M\ket{x}|\leq \eps/2$.  Since the variance of this bit
is $\leq 1/4$, taking $1/3\eps^2$ samples gives us a $\geq 2/3$
probability of obtaining an estimate within $\eps/2$ of $\mu$.   Thus
quantum computers can solve the matrix inversion estimation problem in
time $\tilde O(\log(N)\kappa^2s^2/\eps^3)$.

We can now show that the error dependence of our algorithm cannot be substantially improved.
\begin{thm}\label{thm:error-dep-no-go}
\benum \item
If a quantum algorithm exists for the matrix inversion estimation problem running in time $\poly(\kappa, \log(N), \log(1/\eps))$ then {\bf BQP}={\bf PP}.
\item No relativizing quantum algorithm for the matrix inversion estimation problem can run in time $N^{\alpha}\poly(\kappa)/\eps^\beta$ unless $\alpha + \beta  \geq 1$.
\eenum
\end{thm}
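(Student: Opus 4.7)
The plan is to mirror the two-part structure of \thmref{no-go}, reducing a hard estimation task to matrix inversion estimation via the Feynman-style construction of \eq{feynman-unitary} and then choosing $\eps$ small enough that the hypothesized fast algorithm would violate a known lower bound. In both parts, the acceptance probability $p$ of an auxiliary quantum circuit gets mapped to $\bra{x}M\ket{x}=c\cdot p$ with $c=e^{-2}/(1+e^{-2}+e^{-4})$, so estimating $\bra{x}M\ket{x}$ to additive error $\eps$ determines $p$ to additive error $\eps/c$.

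For part 1, I would take a $\mathbf{PP}$-complete language such as MAJSAT: given $\phi$ on $n$ variables, decide whether more than half of the assignments satisfy $\phi$. A standard $\poly(n)$-qubit, $\poly(n)$-gate circuit samples a uniformly random assignment and accepts iff it satisfies $\phi$, so YES instances have acceptance $p > 1/2 + 2^{-n}$ and NO instances have $p \leq 1/2$. Feeding this circuit into the reduction of \thmref{no-go} yields a matrix inversion estimation instance with $\log N,\kappa = \poly(n)$. Running the hypothesized $\poly(\kappa,\log N,\log(1/\eps))$-time algorithm with $\eps = c\cdot 2^{-n-2}$ resolves $p$ on the correct side of $1/2$ in $\poly(n)$ time, so $\mathbf{PP}\subseteq\mathbf{BQP}$; together with the standard inclusion $\mathbf{BQP}\subseteq\mathbf{PP}$, this gives $\mathbf{BQP}=\mathbf{PP}$.

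For part 2, given an oracle $f:[M]\to\{0,1\}$, I would use the trivial circuit $C_f$ that prepares $\frac{1}{\sqrt{M}}\sum_x\ket{x}$, queries $f$ into an ancilla, and accepts when the ancilla reads $1$; its acceptance probability is $p=|f^{-1}(1)|/M$ and it uses only $O(\log M)$ qubits and $O(\log M)$ gates. The Feynman reduction then produces a matrix $A$ of dimension $N=\tilde O(M)$ with $\kappa=\polylog M$ such that each row of $A$ is $O(1)$-query computable from $f$. A relativizing $N^\alpha\poly(\kappa)/\eps^\beta$-time matrix inversion estimation algorithm therefore yields an algorithm estimating $p$ to additive error $\eps$ using $\tilde O(M^\alpha/\eps^\beta)$ quantum queries to $f$. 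By the Nayak--Wu amplitude-estimation lower bound, distinguishing $|f^{-1}(1)|=M/2-1$ from $|f^{-1}(1)|=M/2+1$ requires $\Omega(M)$ quantum queries, which is exactly estimating $p$ to additive error $1/M$. Setting $\eps=\Theta(1/M)$ forces $\tilde O(M^{\alpha+\beta})=\Omega(M)$, hence $\alpha+\beta\geq 1$.

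The principal technical point to verify is that the Feynman reduction behaves well in the oracle model: a query to any row of $A=I-Ue^{-1/T}$ must reduce to $O(1)$ queries to $f$. This holds because $U$ in \eq{feynman-unitary} is block-diagonal in the time register, so the non-zero entries of a row at time $\tau$ are determined by a single gate of the circuit and thus by at most one query to $f$. This is the same sparsity-preserving argument that underlies the relativizing portion of \thmref{no-go}.
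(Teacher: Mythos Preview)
Your proposal is correct and follows essentially the same route as the paper. The paper's own proof also runs the Feynman reduction on a SAT-evaluating circuit and then argues that sufficiently accurate estimation of $\bra{x}M\ket{x}$ recovers the satisfying-assignment count; the only cosmetic differences are that the paper phrases part~1 in terms of exactly counting solutions (using $\eps=2^{-2n}$) rather than MAJSAT, and for part~2 it invokes the parity lower bound of \cite{parity} rather than Nayak--Wu, but both pairs of choices yield the same $\Omega(2^n)$ barrier and hence the same conclusion $\alpha+\beta\geq 1$.
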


\begin{proof}
\benum\item
A complete problem for the class {\bf PP} is to count the number of satisfying assignments to a SAT formula. Given such formula $\phi$, a quantum circuit can apply it on a superposition of all $2^n$ assignments for variables, generating the state
\[\sum_{z_1, \ldots, z_n\in\{0,1\}} \ket{z_1, \ldots, z_n} \ket{\phi(z_1, \ldots z_n)}.\]
The probability of obtaining $1$ when measuring the last qubit is equal to the number of satisfying truth assignments divided by $2^n$. A matrix inversion estimation procedure which runs in time $\poly\log(1/\epsilon)$ would enable us to estimate this probability to accuracy $2^{-2n}$ in time $\poly(\log (2^{2n})) = \poly(n)$. This would imply that {\bf BQP} = {\bf PP} as required.
\item Now assume that $\phi(z)$ is provided by the output of an
  oracle.  Let $C$ denote the number of $z\in\{0,1\}^n$ such that
  $\phi(z)=1$.  From \cite{parity}, we know that determining the parity
  of $C$ requires $\Omega(2^{n})$ queries to $\phi$.
  However, exactly determining $C$ reduces to the matrix inversion
  estimation problem with $N=2^n$, $\kappa=O(n^2)$ and
  $\eps=2^{-n-2}$.  By assumption we can solve this in time $2^{(\alpha
    + \beta)n}\cdot\poly(n)$, implying that $\alpha+\beta\geq 1$.
\eenum
\end{proof}

\end{document}